\renewcommand{\paragraph}{\subsubsection*}
\providecommand{\urlstyle}[1]{}
\providecommand{\doi}[1]{\href{http://dx.doi.org/#1}{\nolinkurl{doi:#1}}}
\definecolor{RoyalPurple}{rgb}{0.47, 0.32, 0.66}
\newtheorem{definition}{Definition}
\newtheorem{example}[definition]{Example}
\newtheorem{theorem}{Theorem}
\newtheorem{proposition}[theorem]{Proposition}
\newtheorem{lemma}[theorem]{Lemma}
\newtheorem{claim}[theorem]{Claim}
\newcommand{\N}{{\mathbb{N}}}
\newcommand{\Z}{\mathbb{Z}}
\newcommand{\Q}{\mathbb{Q}}
\newcommand{\K}{\mathbb{K}}
\newcommand{\LL}{{\mathbb{L}}}
\newcommand{\OO}{{\mathcal{O}}}
\newcommand{\Spl}{{\mathrm{Spl}}}
\providecommand*{\eu}%
{\ensuremath{\mathrm{e}}}
\providecommand*{\iu}%
{\ensuremath{\mathrm{i}}}
\newcommand{\seq}[3][{}]{\langle #2 \rangle_{#3}^{#1}}
\begin{document}

\title[Hypergeometric Sequences with Quadratic Parameters]{The Membership Problem for Hypergeometric Sequences with Quadratic Parameters}

\author[G.~Kenison]{George Kenison}
\address{George Kenison, Institute of Logic and Computation, TU Wien, Vienna, Austria}
\email{george.kenison@tuwien.ac.at}

\author[K.~Nosan]{Klara Nosan}
\address{Klara Nosan, Universit\'e Paris Cité, CNRS, IRIF, Paris, France}
\email{nosan@irif.fr}

\author[M.~Shirmohammadi]{Mahsa Shirmohammadi}
\address{Mahsa Shirmohammadi, Universit\'e Paris Cité, CNRS, IRIF, Paris, France}
\email{mahsa@irif.fr}

\author[J.~Worrell]{James Worrell}
\address{James Worrell, Department of Computer Science, University of Oxford, Oxford, UK}
\email{jbw@cs.ox.ac.uk}

\thanks{
George Kenison gratefully acknowledges the support of ERC consolidator grant ARTIST
101002685 and WWTF grant ProbInG ICT19-018.
Klara Nosan and Mahsa Shirmohammadi are supported by International Emerging Actions grant (IEA'22), by ANR grant VeSyAM (ANR-22-CE48-0005) and  by the grant CyphAI  (ANR-CREST-JST). 
James Worrell is supported by EPSRC fellowship EP/X033813/1.
}

\maketitle

\DeclareRobustCommand{\gobblefive}[5]{}
\DeclareRobustCommand{\gobblenine}[9]{}
\newcommand*{\SkipTocEntry}{\addtocontents{toc}{\gobblenine}}

\begin{abstract}
Hypergeometric sequences are rational-valued sequences that satisfy first-order linear recurrence relations with polynomial coefficients; that is, a hypergeometric sequence
\(\seq[\infty]{u_n}{n=0}\) is one that satisfies a recurrence of the form
$f(n)u_n = g(n)u_{n-1}$ where $f,g \in \Z[x]$.

In this paper, we consider the Membership Problem for hypergeometric sequences: given a hypergeometric sequence \(\seq[\infty]{u_n}{n=0}\) and a target
value $t\in \Q$, determine whether $u_n=t$ for some index \(n\).
We establish decidability of the Membership Problem under the
assumption that either (i)~$f$ and $g$ have distinct splitting fields or (ii)~$f$ and $g$ are monic polynomials that both split over a quadratic extension of $\Q$.
Our results are based on an analysis of the prime divisors of polynomial sequences $\langle f(n) \rangle_{n=1}^\infty$
and $\langle g(n) \rangle_{n=1}^\infty$ appearing in the recurrence relation.
\end{abstract}

\maketitle

\section{Introduction}

\paragraph{Background and Motivation}
Recursively defined sequences are ubiquitous in mathematics and computer science.  A fundamental open problem in this context is the decidability of the \emph{Membership Problem}, which asks to determine whether a given value is an element of a given sequence.  The Skolem Problem for \emph{C-finite} sequences (those sequences that satisfy a linear recurrence relation with constant coefficients) is the best known variant of the Membership Problem.  The Skolem Problem asks to determine whether a given C-finite sequence vanishes at some index~\cite{everest2003recurrence}.  Decidability of this problem is known for recurrences of order at most four \cite{mignotte1984distance,vereshchagin1985occurence} but is open in general. Proving decidability of the Skolem Problem would be equivalent to giving an effective proof of the celebrated Skolem--Mahler--Lech Theorem, which states that every non-degenerate C-finite sequence that is not identically zero has a finite set of zeros.

In this paper we consider the most basic case of the Membership Problem for a class of \emph{P-finite} sequences (those sequences that satisfy a linear recurrence with polynomial coefficients).  Specifically, we consider the Membership Problem for the class of hypergeometric sequences.  A rational-valued sequence $\langle u_n \rangle_{n=0}^\infty$ is \emph{hypergeometric} if it satisfies a recurrence relation of the form
 \begin{equation}\label{eq:rel}
 f(n)u_{n} - g(n)u_{n-1} = 0 \, ,
\end{equation} 
 where $f,g \in \Z[x]$ are polynomials, and  $f(x)$
 has no non-negative integer zeros.   
 By the latter assumption on~$f(x)$, the  recurrence relation~\eqref{eq:rel} 
 uniquely defines an infinite sequence of rational numbers once the initial value $u_0\in\Q$ is specified.
The term 
 \emph{hypergeometric} was introduced by John Wallis in the 17th century~\cite{wallis1655arithmetica} and
 hypergeometric sequences and their 
 associated generating functions, the hypergeometric series, 
 have a long and illustrious history in the mathematics literature.  In particular, hypergeometric series encompass many of the common mathematical functions
 and have numerous applications in analytic combinatorics~\cite{FS09, Concrete11}.

The Membership Problem for hypergeometric sequences asks, given a recurrence~\eqref{eq:rel}, initial value $u_0\in \Q$, and target $t\in \Q$, whether $t$ lies in the sequence $\langle u_n \rangle_{n=0}^\infty$.  At first glance, this problem may seem easy to decide.  Without loss of generality we can assume that the sequence $\langle u_n \rangle_{n=0}^\infty$ either diverges to infinity or converges to a finite limit.
If the sequence does not converge to~$t$
then one can
compute a bound $B$ such that $u_n \neq t$ for all $n>B$.  Such a bound can also be computed in case one is promised that $\langle u_n \rangle_{n=0}^\infty$ converges to $t$, by using the fact that the convergence to $t$ is ultimately monotonic.  However the above case distinction does not suffice to show decidability of the Membership Problem!  The problem is that 
it is not known how to decide whether a hypergeometric seqeuence converges to a given rational limit.
The latter is related to deep conjectures about the gamma function (see the discussion below). In this paper we will take a different route to establish decidability of certain cases of the Membership Problem.

\paragraph{Contributions}
We approach the Membership Problem by considering the prime divisors of the values of a hypergeometric sequence
$\langle u_n \rangle_{n=0}^\infty$.  The overall strategy is to exhibit an effective threshold $B$ such that for all $n>B$ there is a prime divisor of $u_n$ that is not a divisor of the target $t$.  Our two main contributions are as follows:
\begin{itemize}
    \item The Membership Problem for hypergeometric sequences whose polynomial coefficients (as in \eqref{eq:rel}) have distinct splitting fields is decidable (\autoref{theo-distinctsplit}).
    \item The Membership Problem for hypergeometric sequences whose polynomial coefficients are monic and split over a quadratic field is decidable (\autoref{theo-decide-quad}).
\end{itemize}

The proofs of our main results involve two different implementations of our general strategy.  The proof of \Cref{theo-distinctsplit} applies the Chebotarev density theorem to find a single prime $p\in \Z$ that does not divide the target $t$ but divides all members of an infinite tail of the sequence.  Meanwhile, the proof of \Cref{theo-decide-quad} shows that for all sufficiently large $n$ there exists a prime $p$, that is allowed to depend on $n$, such that $p$ divides $u_n$ but not $t$.  To find such a prime we rely on (a mild generalisation of) a result of~\cite{everest07} concerning prime divisors of the values of a quadratic polynomial.

\Cref{theo-distinctsplit} expands the class of sequences for which the Membership Problem can be solved and further isolates its hard instances.  The paper~\cite{NPSW022} handles perhaps the easiest sub-case of the Membership Problem that does not fall under~\Cref{theo-distinctsplit}, namely when the polynomial coefficients both split over~$\Q$.  The second main result of the present paper handles another naturally occurring sub-case: when the polynomial coefficients split over the ring of integers of a quadratic field $\K$.  A common refinement of these two cases---that the polynomial coefficients split over $\K$---is the subject of current research.  Generalisations of the  results of~\cite{everest07} to higher-degree polynomials are a subject of ongoing research in number theory and potentially would allow us to extend our approach beyond the quadratic case.
 
\paragraph{Related Work}
There is a growing body of work that addresses membership and threshold problems for sequences satisfying low-order polynomial recurrences.  Here the \emph{Threshold Problem} asks to determine whether every term in a sequence lies above a given threshold, for example, whether every term is non-negative.  

The recent preprint~\cite{kenison2022applications} 
establishes decidability results (some conditional on Schanuel's Conjecture) for both the Membership and Threshold Problems for hypergeometric sequences.
The approach of~\cite{kenison2022applications}
relies on transcendence theory for the gamma function (as well as underlying properties of modular functions established by Nesterenko~\cite{nesterenko1996modular}).
By contrast, the algebraic techniques of the present paper seem appropriate only for the Membership Problem.  
We note that the approach of~\cite{kenison2022applications} requires certain restrictions, e.g., decidability is only unconditional when the parameters are drawn from imaginary quadratic fields.

The problem of deciding positivity of order-two P-finite sequences and of deciding the existence of zeros in such sequences is considered in~\cite{KauersP10,KenisonKLLMOW021,NeumannO021,PillweinS15}. These works all place syntactic restrictions on the degrees of the polynomial coefficients involved in the recurrences, and all four give algorithms that are not guaranteed to terminate for all initial values of a given recurrence.  For example, in~\cite{KauersP10}
the termination proof of the algorithm for determining positivity of order-two sequences requires 
that the characteristic roots of the recurrence be distinct and that one is working with a generic solution of the recurrence (in which the asymptotic rate of growth corresponds to the dominant characteristic root of the recurrence).   
Simple manipulations show that the Membership Problem considered in this paper is equivalent to the problem of finding a zero term in an order-two P-finite sequence $\langle u_n \rangle_{n=0}^\infty$ arising as a sum of two hypergeometric sequences.

Links between the Membership and Threshold Problems and the Rohrlich--Lang Conjecture appear in previous works \cite{kenison2020positivity, NPSW022}.  Here the Rohrlich--Lang Conjecture concerns multiplicative relations for the gamma function evaluated at rational points.

The $p$-adic techniques used in the present paper bear many similarities with 
work on developing criteria for hypergeometric sequences to be integer valued.
For example, work by Landau in 1900 \cite{landau1900factorielles} uses \(p\)-adic analysis to establish a necessary and sufficient condition for integrality in the so-called class of \emph{factorial} hypergeometric sequences.
In more recent work, Hong and Wang \cite{hongarxiv2016} establish a criterion for the integrality of hypergeometric series with parameters from quadratic fields.
We observe that some of the intermediate asymptotic results in Hong and Wang's note are close to \cite[Corollary~3.1]{Moll09}
(\autoref{prop:moll} herein).

\paragraph{Structure}

The remainder of this paper is structured as follows.
We briefly review preliminary material in \cref{sec-pre}, including some standard assumptions about instances of the Membership Problem that can be made without loss of generality.
In \cref{sec-polyseq}, we recall useful technical results on the prime divisors of hypergeometric sequences that satisfy monic recurrence relations (see \eqref{eq:poly}).
In \cref{sec-unequalsplitting}, we prove \cref{theo-distinctsplit}.
The proof of \cref{theo-decide-quad} is given in \cref{sec-quad}.
We discuss ideas for future research in \cref{sec:discussion}.
The remaining appendices prove technical results omitted from the main text.

\section{Preliminaries}
\label{sec-pre}

\paragraph{Hypergeometric Sequences}

A hypergeometric sequence $\langle u_n\rangle_{n=0}^\infty$ is a sequence of rational numbers that satisfies 
a recurrence of the form \eqref{eq:rel}
where $f,g \in \Z[x]$ are polynomials, and  $f(x)$
has no non-negative integer zeros. 
By the latter requirement on~$f(x)$, the  recurrence~\eqref{eq:rel} 
uniquely defines an infinite sequence of rational numbers once the initial element $u_0$ 
is specified.

An instance of the Membership Problem for hypergeometric sequences consists of a recurrence~\eqref{eq:rel}, an initial value 
$u_0 \in \Q$, and a target $t \in \Q$.  
The problem asks to decide whether there exists \(n\in\N\) such that \(u_n = t\).
We say that such an instance is in \emph{standard form} if~(S1) the initial condition is $u_0=1$; (S2)~the polynomial $g(x)$ has no positive integer root; (S3)~the target $t$ is non-zero;
(S4)~the polynomials $f$ and $g$ have the same degree and leading coefficient.

For the purposes of deciding the Membership Problem, we can assume without loss of generality that all instances are in standard form.  An arbitrary instance can be transformed into one satisfying Condition~(S1) by 
multiplying the sequence and target by a suitable constant.  Instances of the  Membership Problem that fail to satisfy 
Conditions~(S2) and (S3) are trivially solvable.  The positive integer roots of $g$ can be computed and for any such root $n_0$, we have $u_n=0$ for all $n\geq n_0$.  %
Finally,
for recurrences that fail Condition~(S4) we have that \[ \frac{u_n}{u_{n-1}}=\frac{g(n)}{f(n)} \] either converges to $0$ or diverges in absolute value.  Under the assumption that $t\neq 0$, in each case we can compute an effective threshold $n_0$ such that $u_n\neq t$ for all $n\geq n_0$.

\paragraph{The $p$-adic valuation}
Let $p\in \N$ be a  prime.
 Denote by~$v_p:\Q \to \Z \cup\{\infty\}$
the $p$-adic valuation on~$\Q$. 
Recall that for a  non-zero  number~$x\in \Q$, 
$v_p(x)$ is the unique integer such that~$x$ can be written in the form
\[x=p^{v_p(x)}\; \frac{a}{b}\]
where $a,b\in \Z$ and $p$ divides neither $a$ nor $b$.
The value $v_p(0)$ is defined to be $\infty$. 
The valuation possesses two important properties:
\begin{enumerate}
	\item[-]$v_p(x+y)\geq \min\{v_p(x),v_p(y)\}$ \, (\emph{strong triangle inequality}),
	\item[-]$v_p(xy)=v_p(x)+v_p(y)$ \, (\emph{multiplicative property}).
\end{enumerate}

\paragraph{Asymptotic estimates for series over primes}
Given  ${\sim} \in {\{<,=,> \}}$ and $x\in \Q$,
we denote sums over primes \(p\in\N\) such that 
\(p \sim x\) by \(\sum_{p \sim x}\).
Let \(\pi(x) := \sum_{p\le x} 1\) count the number of primes of size at most~\(x\).
The following result is a consequence of the celebrated Prime Number Theorem.
    \begin{theorem}\label{thm:pnt} For \(\pi(x)\) as above, we have
            \begin{equation*}
               \pi(x) = \frac{x}{\log x} + O\Bigl(\frac{x}{\log^2 x}\Bigr).
            \end{equation*}
    \end{theorem}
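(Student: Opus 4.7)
The plan is to deduce the statement from a quantitative form of the Prime Number Theorem. The version I would invoke is the classical de la Vallée Poussin bound $\pi(x) = \operatorname{li}(x) + O\!\bigl(x\exp(-c\sqrt{\log x})\bigr)$ for some constant $c>0$, where $\operatorname{li}(x) = \int_2^x dt/\log t$ is the logarithmic integral. Since $x\exp(-c\sqrt{\log x})$ decays faster than any power of $1/\log x$ relative to $x$, the error term is already $O(x/\log^2 x)$, so it suffices to establish the asymptotic expansion $\operatorname{li}(x) = x/\log x + O(x/\log^2 x)$.

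For the expansion of $\operatorname{li}(x)$ I would integrate by parts with $u=1/\log t$ and $dv=dt$ to obtain
\[
\operatorname{li}(x) \;=\; \frac{x}{\log x} - \frac{2}{\log 2} + \int_2^x \frac{dt}{\log^2 t}.
\]
To control the remaining integral I would split the range at $\sqrt{x}$. On $[2,\sqrt{x}]$ the integrand is bounded by $1/\log^2 2$, so the contribution is at most $\sqrt{x}/\log^2 2 = O(\sqrt{x})$. On $[\sqrt{x},x]$ the integrand is bounded by $1/\log^2\sqrt{x} = 4/\log^2 x$, so the contribution is at most $4x/\log^2 x$. Adding these bounds, $\int_2^x dt/\log^2 t = O(x/\log^2 x)$, and absorbing the harmless constant $-2/\log 2$ completes the expansion. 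Substituting back into the PNT yields the stated estimate.

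The elementary part of the argument is routine; the genuine obstacle is the Prime Number Theorem itself with an effective error term, whose proof relies on the analytic continuation of $\zeta(s)$, the non-vanishing of $\zeta$ on $\Re(s)=1$ (in fact in a zero-free region of the form $\Re(s)\ge 1-c/\log(|\Im(s)|+2)$), and a contour-shift argument via Perron's formula applied to $-\zeta'/\zeta$ to extract an asymptotic for $\psi(x)=\sum_{p^k\le x}\log p$, from which $\pi(x)$ is recovered by partial summation. For the purposes of this paper one simply cites this classical result.
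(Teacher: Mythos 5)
The paper states this result as a cited consequence of the Prime Number Theorem and gives no proof of its own, so there is no internal argument to compare against. Your sketch is correct and is the standard derivation: the integration by parts giving $\operatorname{li}(x) = x/\log x - 2/\log 2 + \int_2^x dt/\log^2 t$ is right, the split of the remaining integral at $\sqrt{x}$ correctly yields $O(\sqrt{x}) + O(x/\log^2 x) = O(x/\log^2 x)$, and the de la Vall\'ee Poussin error term $O\bigl(x\exp(-c\sqrt{\log x})\bigr)$ is indeed dominated by $x/\log^2 x$, so the stated form follows. For the purposes of this paper one would simply cite the quantitative PNT, as you note, rather than reprove the zero-free region.
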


As an aside, an element \(a\in\Z\) is a \emph{square} modulo a  prime \(p\in \N\) if there exists an \(x\in\Z\) such that  \(x^2 \equiv a \pmod{p}\). 
An element \(a\in\Z\) is a \emph{quadratic residue} modulo  \(p\) if \(a\) is both a square modulo \(p\), and furthermore \(a\) and \(p\) are co-prime.
We denote by \(\mathcal{L}_p\) the set of quadratic residues modulo \(p\).

Recall the first of Mertens' three theorems \cite{mertens1874zahlentheorie}  (see also \cite[Theorem~4.10]{apostol1998introduction}),
\[
 \sum_{p \leq x } \frac{\log p}{p} =  \log x + O(1) \,.
 \] 
In the sequel we shall make use of the following refinement of Mertens' theorem.
\begin{proposition} \label{prop:apostol_primes_beta}
Suppose that \(a \in \Z\) is not a  perfect square. 
Then
\begin{equation*}
\sum_{p\leq x, \, a \in \mathcal{L}_p}  \frac{\log p}{p} =  \frac{1}{2}\log(x)+O(1).
\end{equation*}
\end{proposition}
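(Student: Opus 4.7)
The plan is to express the indicator of $a \in \mathcal{L}_p$ using a Legendre symbol, thereby reducing the statement to the classical Mertens theorem together with a Mertens-type estimate for a non-principal Dirichlet character. For an odd prime $p$ with $p \nmid a$, we have $a \in \mathcal{L}_p$ if and only if $\left(\tfrac{a}{p}\right) = 1$, so
\[
\mathbf{1}[a \in \mathcal{L}_p] \;=\; \tfrac{1}{2}\Bigl(1 + \left(\tfrac{a}{p}\right)\Bigr).
\]
Substituting this identity and absorbing the finitely many primes with $p \mid 2a$ into an $O(1)$ error,
\[
\sum_{\substack{p \leq x \\ a \in \mathcal{L}_p}} \frac{\log p}{p}
\;=\; \tfrac{1}{2}\sum_{\substack{p \leq x \\ p \nmid 2a}} \frac{\log p}{p} \;+\; \tfrac{1}{2}\sum_{\substack{p \leq x \\ p \nmid 2a}} \left(\tfrac{a}{p}\right) \frac{\log p}{p} \;+\; O(1).
\]
Mertens' first theorem (quoted in the excerpt) handles the first term, yielding $\tfrac{1}{2}\log x + O(1)$, so the whole task reduces to bounding the twisted sum by $O(1)$.

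To that end, write $a = b^2 d$ with $d \in \Z\setminus\{1\}$ squarefree; this is possible precisely because $a$ is not a perfect square (in particular, if $a<0$ is of the form $-m^2$ we take $d = -1$). Then $\left(\tfrac{a}{p}\right) = \left(\tfrac{d}{p}\right)$ for all odd $p \nmid a$, and by quadratic reciprocity together with the supplementary laws for $\left(\tfrac{-1}{p}\right)$ and $\left(\tfrac{2}{p}\right)$, the map $p \mapsto \left(\tfrac{d}{p}\right)$ coincides, on primes coprime to the discriminant $D$ of $\Q(\sqrt{d})$, with a fixed non-principal real Dirichlet character $\chi$ modulo $|D|$ (namely the Kronecker symbol attached to $\Q(\sqrt{d})$).

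It therefore suffices to establish the classical estimate
\[
\sum_{p \leq x} \chi(p)\,\frac{\log p}{p} \;=\; O(1)
\]
for every non-principal Dirichlet character $\chi$. This can be obtained analytically from the identity $-L'(s,\chi)/L(s,\chi) = \sum_{n\geq 1}\chi(n)\Lambda(n)/n^s$ combined with partial summation and the non-vanishing $L(1,\chi) \neq 0$, or more elementarily by a Mertens-style argument starting from $\sum_{n\leq x}\chi(n)\Lambda(n) = O(x)$. Combining this $O(1)$ bound with the principal contribution produces $\tfrac{1}{2}\log x + O(1)$ as required. The main obstacle is precisely the cancellation in the twisted von Mangoldt sum: everything else is bookkeeping, but the $O(1)$ bound hinges on the non-vanishing of $L(1,\chi)$ for a real non-principal character, which is the non-trivial analytic input behind Dirichlet's theorem on primes in arithmetic progressions.
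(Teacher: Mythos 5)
Your proof is correct, but note first that the paper does not actually prove this proposition: it is stated with an attribution to Selberg's elementary treatment of Dirichlet's theorem in arithmetic progressions (Equation (3.3) of that paper), so there is no internal proof to compare against. Your argument therefore supplies a proof where the paper deferred to a citation, and the route you take is the standard one via Dirichlet characters: rewrite the indicator as $\tfrac{1}{2}\bigl(1 + \bigl(\tfrac{a}{p}\bigr)\bigr)$, reduce to a squarefree $d\neq 1$ so that $p\mapsto\bigl(\tfrac{d}{p}\bigr)$ agrees with a non-principal real character $\chi$ modulo the discriminant of $\Q(\sqrt{d})$ (using that $a$ is not a perfect square to guarantee non-principality), handle the principal piece with Mertens, and bound the twisted piece $\sum_{p\le x}\chi(p)\log p/p$ by $O(1)$. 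That matches the spirit of the citation, since the set $\{p : a\in\mathcal{L}_p\}$ is cut out by congruence conditions and the statement is essentially Mertens for arithmetic progressions, but you present the $L$-function route whereas the paper cites Selberg's elementary one. Both are valid, and you correctly pinpoint $L(1,\chi)\neq 0$ as the nontrivial input.

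One small cavil: the parenthetical alternative, ``or more elementarily by a Mertens-style argument starting from $\sum_{n\le x}\chi(n)\Lambda(n)=O(x)$,'' is misleading as phrased. That bound is trivial, as it follows from $\sum_{n\le x}\Lambda(n)=O(x)$ by the triangle inequality, and by itself it is not strong enough to push through partial summation to the required $O(1)$ bound for $\sum_{p\le x}\chi(p)\log p/p$. An elementary argument still needs a genuine cancellation input equivalent in strength to $L(1,\chi)\neq 0$, which is exactly what Selberg's paper supplies. Since your main line correctly isolates that non-vanishing as the crux, the overall proof stands; just be careful not to suggest the twisted sum is bounded for cheap reasons.
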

\autoref{prop:apostol_primes_beta} appears in work by Selberg  \cite[Equation (3.3)]{selberg1950pnt-ap}
on an elementary proof of Dirichlet's theorem in arithmetic progressions.

\section{Monic Recurrences}
\label{sec-polyseq}
In this section, we study  
hypergeometric sequences~$\langle u_n \rangle_{n=0}^\infty$, satisfying  first-order recurrences of the special form
\begin{equation}\label{eq:poly}
u_n=f(n) u_{n-1} \quad \text{ and } \quad u_0=1,
\end{equation}
where $f\in \Z[x]$ has no non-negative integer roots.  We call such a
recurrence \emph{monic}.  We analyse the prime divisors of
sequences~$\langle u_n \rangle_{n=0}^\infty$ that satisfy
such a monic
recurrence.  In particular, we recall two results that will serve as
stepping stones toward our main decidability theorems in the
subsequent sections.  
Following~\cite{Moll09}, for a fixed prime \(p\), the first result establishes an asymptotic estimate 
for the $p$-adic valuation $v_p(u_n)$
as $n$ tends to infinity.  Next, following~\cite{everest07}, when
$f$ is a quadratic polynomial we prove a result that yields asymptotic
estimates on the size of the largest prime divisors of $u_n$ as $n$
tends to infinity.  The restriction on the degree is necessary given the
state of the art: estimates on large prime divisors constitute hard
open problems in the theory of
polynomials~\cite{hinz1996multiplicative,heathbrown2001largest}.

\subsection{Asymptotic growth of valuations}
\label{subsec:asymval}

Let $p\in \N$ be  prime.
Consider a hypergeometric sequence $\langle u_n \rangle_{n=0}^\infty$, satisfying a monic recurrence~\eqref{eq:poly}. 
Since $u_n=\prod_{k=1}^n f(k)$, we have 
\begin{equation*}
	v_p(u_n) = \sum_{k=1}^n v_p(f(k)). 
\end{equation*}
In this section we recall the result of~\cite{Moll09} that
characterises the asymptotic growth of $v_p(u_n)$ in terms of the
number of roots of $f$ in $\Z/p\Z$.  The key tool in this argument is
Hensel's Lemma.

\begin{theorem}[Hensel's Lemma {\cite[Theorem 4.7.2]{gouvea2020padic}}]
Let \(f(x)\in\Z[x]\)  and assume that there exist polynomials 
\(g(x)\) and \(h(x)\) such that: 
i) \(g(x)\) is monic, 
ii) \(g(x)\) and \(h(x)\) are relatively prime modulo \(p\), and 
iii) \(f(x) = g(x)h(x) \pmod{p}\).

Then for all $e>0$ there exist polynomials \(g_1(x),h_1(x)\in\Z[x]\) such that: i) \(g_1(x)\) is monic, ii) \(g_1(x)\equiv g(x) \pmod{p}\) and \(h_1(x)\equiv h(x) \pmod{p}\), and \(f(x)=g_1(x)h_1(x) \pmod{p^e}\).
\end{theorem}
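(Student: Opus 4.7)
The natural route is induction on $e$. For $e=1$ the statement is exactly the hypothesis, so take $g_1 = g$ and $h_1 = h$. Assume the conclusion holds at level $e = k$, with polynomials $g_k, h_k \in \Z[x]$ satisfying $g_k$ monic, $g_k \equiv g \pmod{p}$, $h_k \equiv h \pmod{p}$, and $f \equiv g_k h_k \pmod{p^k}$. The plan is to lift by searching for corrections of the form $g_{k+1} = g_k + p^k G$ and $h_{k+1} = h_k + p^k H$ with $G, H \in \Z[x]$, and to enforce $\deg G < \deg g$ so that $g_{k+1}$ remains monic of the same degree as $g$.

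Expanding the product and reducing modulo $p^{k+1}$ (the cross term carries a factor $p^{2k}$ and vanishes once $k \geq 1$) gives
\begin{equation*}
g_{k+1} h_{k+1} \equiv g_k h_k + p^k \bigl( G h_k + H g_k \bigr) \pmod{p^{k+1}}.
\end{equation*}
Writing $f - g_k h_k = p^k r$ for some $r \in \Z[x]$ afforded by the inductive hypothesis, the lifting problem reduces to finding $G, H$ with
\begin{equation*}
r \equiv G h_k + H g_k \pmod{p}.
\end{equation*}

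To solve this congruence, invoke coprimality of $g$ and $h$ modulo $p$: there exist $A, B \in \Z[x]$ with $A g + B h \equiv 1 \pmod{p}$, hence $r \equiv (A r) g + (B r) h \pmod{p}$. Since $g_k \equiv g$ and $h_k \equiv h$ modulo $p$, we may replace $g$ by $g_k$ and $h$ by $h_k$. To meet the degree constraint, perform Euclidean division of $B r$ by $g_k$ (which is monic, so division over $\Z[x]$ is legal) to write $B r = q g_k + G$ with $\deg G < \deg g_k$. Substituting and setting $H := A r + q h_k$ yields $r \equiv G h_k + H g_k \pmod{p}$, completing the inductive step.

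The main obstacle is the simultaneous bookkeeping: one must simultaneously maintain that $g_{k+1}$ is monic (forcing $\deg G < \deg g$), that the reductions modulo $p$ match $g$ and $h$ (automatic since the correction has a factor $p^k$), and that the linear equation $G h_k + H g_k \equiv r \pmod p$ is solvable with the degree restriction on $G$. The division trick above is exactly what reconciles these; without the hypothesis that $g$ is monic the Euclidean step could fail over $\Z[x]$, which is where the monicity assumption is consumed.
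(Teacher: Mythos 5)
Your proof is correct and is the standard lifting argument for the factorisation form of Hensel's Lemma; the paper itself does not give a proof but cites Gouv\^{e}a, and your argument coincides with the one given there (induction on $e$, a B\'{e}zout identity from coprimality, and Euclidean division by the monic $g_k$ to keep the degree of the correction small so that monicity is preserved). In short, it takes the same approach as the reference the paper relies on.
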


Define 
a \emph{Hensel prime} for $f\in \Z[x]$ to be a prime that does not divide the
discriminant of any irreducible factor of $f$.  Since the discriminant
of an irreducible polynomial is non-zero, all but finitely many primes
are Hensel primes for a given polynomial.

Given a prime $p$, suppose that $f \in \Z[x]$ has $m$ roots in
$\Z/p\Z$,
i.e., suppose that $f$ factors as
\[ f=(x-\alpha_1)^{m_1} \cdots (x-\alpha_\ell)^{m_\ell} g(x) \pmod{p},\]
where $\alpha_1,\ldots,\alpha_\ell\in \Z$, $g\in \Z[x]$ has no root  modulo $p$, and $m=m_1+\cdots+m_\ell$.
In this case, if 
$p$ is a Hensel prime for $f$ then for all $e>0$ we can apply Hensel's Lemma to obtain a factorisation
\[f(x)=(x-\beta_1)^{m_1} \cdots (x-\beta_\ell)^{m_\ell} h(x) \pmod{p^e}\]
where $\beta_1,\ldots,\beta_\ell \in \Z$, and $h\in \Z[x]$ has no root
modulo $p$.  
In other words, $f$ has exactly $m$ roots in the ring $\Z/p^e\Z$.

The following result is a reformulation
of~\cite[Corollary 3.1]{Moll09}.  For later use, we formulate the result so as to make explicit the 
dependence of the bounds for $v_p(u_n)$ on the prime $p$.  The proof remains the same.
\begin{proposition}[{\cite[Corollary~3.1]{Moll09}}]
\label{prop:moll}
Suppose that $\langle u_n \rangle_{n=0}^\infty$ satisfies the monic recurrence in Equation~\eqref{eq:poly} with polynomial coefficient
$f \in \Z[x]$.  
Let $p$ be a Hensel prime of $f$ such that $f$ has $m$
roots modulo $p$.
Then there exist  effectively computable constants~$\varepsilon, n_0>0$ such that if $n>n_0$, 
\[m\Big(\frac{n}{p-1}- \frac{\varepsilon \log n}{\log p}\Big) \leq v_p(u_n) \leq m\Big(\frac{n}{p-1}+ \frac{\varepsilon \log n}{\log p}\Big)\]
where $\varepsilon$ depends only on~$f$.
\end{proposition}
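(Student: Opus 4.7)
The plan is to decompose $v_p(f(k))$ via Hensel lifting and evaluate the resulting sum by a Legendre-style layer-cake argument. Since $u_n=\prod_{k=1}^n f(k)$, the multiplicative property of the $p$-adic valuation gives $v_p(u_n)=\sum_{k=1}^n v_p(f(k))$, reducing the problem to estimating each summand $v_p(f(k))$ and totalling.

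Because $p$ is a Hensel prime, the factorisation
$f(x)\equiv (x-\alpha_1)^{m_1}\cdots(x-\alpha_\ell)^{m_\ell}\,g(x)\pmod p$
lifts by iterated applications of Hensel's Lemma to coherent factorisations modulo $p^e$ for every $e\ge 1$; taking the inverse limit yields roots $\beta_1,\ldots,\beta_\ell$ in the ring of $p$-adic integers $\Z_p$ together with $h\in\Z_p[x]$, whose reduction modulo $p$ has no root, such that $f(x)=(x-\beta_1)^{m_1}\cdots(x-\beta_\ell)^{m_\ell}\,h(x)$ in $\Z_p[x]$. Since $h(k)\not\equiv 0\pmod p$ for every integer $k$, we have $v_p(h(k))=0$, and hence $v_p(f(k))=\sum_{i=1}^\ell m_i\,v_p(k-\beta_i)$.

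Exchanging the order of summation and applying the layer-cake identity to the non-negative integer $v_p(k-\beta_i)$,
\[
v_p(u_n) \;=\; \sum_{i=1}^\ell m_i \sum_{e\ge 1} \#\bigl\{k\in\{1,\ldots,n\}\colon k\equiv\beta_i\pmod{p^e}\bigr\}.
\]
The residue class $\beta_i\bmod p^e$ defines a single coset in $\Z/p^e\Z$ (via the canonical projection $\Z_p\to\Z/p^e\Z$), so each inner count equals $n/p^e+\theta_{e,i}$ with $|\theta_{e,i}|\le 1$. The principal contribution is the geometric series $\sum_{e\ge 1}n/p^e=n/(p-1)$, and since $\sum_{i=1}^\ell m_i=m$ this yields the leading term $mn/(p-1)$.

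The main obstacle is an effective truncation of the sum over $e$ with an error that is uniform in $p$. Here I would use the crude polynomial bound $|f(k)|\le Cn^d$ for $k\le n$, with $d=\deg f$ and $C$ computable from the coefficients of $f$; this forces $v_p(k-\beta_i)\le v_p(f(k))\le d\log_p(Cn)$, so each inner sum vanishes for $e>E:=\lceil d\log_p(Cn)\rceil=O(\log n/\log p)$. The accumulated error is therefore at most $mE=O(m\log n/\log p)$ with an implied constant depending only on $d$ and $C$, both computable from $f$. Combining the leading and error terms yields $v_p(u_n)=mn/(p-1)+O(m\log n/\log p)$, which is precisely the two-sided bound of the proposition with effectively computable $\varepsilon$ and $n_0$ depending only on $f$. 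The bookkeeping subtlety lies in keeping the per-layer error bounded by $1$ (independently of $p$) and the truncation parameter $E$ linear in $\log n/\log p$, both of which rely only on the polynomial growth of $|f(k)|$ in $k$.
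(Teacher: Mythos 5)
Your proposal is correct and follows essentially the same route as the paper: decompose $v_p(f(k))=\sum_i m_i\,v_p(k-\beta_i)$ via Hensel lifting, exchange the order of summation (layer-cake), count residues per layer in an arithmetic progression of modulus $p^e$ with a $\pm 1$ error, truncate at $e=O(\log n/\log p)$ using a polynomial bound on $|f(k)|$, and sum the geometric series. The only cosmetic difference is that you lift all the way to $\Z_p$ via the inverse limit, whereas the paper stops at a factorisation modulo $p^{e_{\max}}$; one small point worth making explicit is that after truncating at $E$ the leading term is $m\sum_{e=1}^{E} n/p^e = mn(1-p^{-E})/(p-1)$, not exactly $mn/(p-1)$, but the dropped tail is bounded by a constant depending only on $f$ (since $p^{E}\gtrsim n^{d}$) and is absorbed into the stated error.
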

\begin{proof}
The function~$|f(x)|$ is eventually monotonically increasing on~$\N$.
There exists an effectively computable bound~$n_0$ such that 
for all $n \geq n_0$ and all $1 \leq k \leq n$, the inequality 
$|f(k)| \leq |f(n)|$ holds. 

Furthermore, there exists an effective constant $\varepsilon_0>0$,
independent of $p$, such that for all $n\geq n_0$
and all $1\leq k \leq n$ we have 
\[ |f(k)|< n^{\varepsilon_0} = p^{\varepsilon_0 \log n / \log p}.\]

Fix $n\geq n_0$ and define $e_{\max}$ to be the smallest power of~$p$ such that $p^{e_{\max}-1} \leq  |f(n)| < p^{e_{\max}}$.
Then
\begin{equation}
 \label{eq:emax}
e_{\max} \leq \frac{\varepsilon_0 \log n}{\log p}. 
\end{equation}

Since $p$ is a Hensel prime, by Hensel's Lemma, there is a factorisation 
\[  f(x)=(x-\beta_1)^{m_1} \cdots (x-\beta_\ell)^{m_\ell} h(x) 
\pmod{p^{e_{\max}}}. \]
where $m=m_1+\cdots+m_\ell$ and $h$ has no zero modulo~$p$.  

Denote by $\mathbb{I}\{p^e \mid x\}$ the function such that
\[ \mathbb{I}\{p^e \mid x \}:=
\begin{cases}
    1 & \text{ if }\ p^e \mid x, \\
    0 & \text{ otherwise.}
\end{cases}
\]
\color{black}

Since $v_p(f(k)) \leq e_{\max}$ for all~$k\leq n$, we have

\begin{align}
v_p(u_n) =& \sum_{k=1}^n v_p(f(k))\notag \\
           =& \sum_{k=1}^n \sum_{i=1}^\ell m_i \, v_p(k-\beta_i) \notag \\
        =& \sum_{k=1}^n \sum_{i=1}^\ell \sum_{e=1}^{e_{\max}} m_i \, \mathbb{I}\{p^e \mid k - \beta_i\} \notag\\
        =& \sum_{e=1}^{e_{\max}}\sum_{i=1}^\ell
        \sum_{k=1}^n
        m_i \mathbb{I}\{p^e \mid k - \beta_i\}. \label{eq:TAG}
\end{align}

Now for all $1\leq e\leq e^{\max}$ the set $\{ k \in \N : p^e \mid k-\beta_i\}$
is an arithmetic progression with common difference $p^e$ and so
\begin{equation}
\label{eq-m-AP}
		\ \frac{n}{p^e} -1  \leq  \sum_{k=1}^n \mathbb{I}\{p^e \mid k-\beta_i \} \leq \frac{n}{p^e} +1 ,
	\end{equation} 
Combining inequality~\eqref{eq-m-AP} with Equation~\eqref{eq:TAG} we obtain
\begin{equation}
 	\label{eq-sum-III}
	m \sum_{e=1}^{e_{\max}} \Big( \frac{n}{p^e} -1 \Big)  \leq  v_p(u_n)  \leq m \sum_{e=1}^{e_{\max}} \Big( \frac{n}{p^e} +1 \Big).
\end{equation}

Let $\varepsilon := \varepsilon_0+1$.
The desired result follows by sandwiching the term 
$\sum_{e=1}^{e_{\max}} \frac{1}{p^e}$ in~\eqref{eq-sum-III} by
\[
\frac{1-|f(n)|^{-1}}{p-1}\leq
\frac{1-p^{-e_{\max}}}{p-1}  =\sum_{e=1}^{e_{\max}}  \frac{1}{p^e} \leq \, \frac{1}{p-1}\]
in combination with the upper bound on $e_{\max}$ in~\eqref{eq:emax}.
\end{proof}

\subsection{Asymptotic estimate for the largest prime divisor}
Fix a polynomial $f(x):=x^2+\beta \in\Z[x]$.  We assume that $-\beta$
is not a perfect square, which is equivalent to assuming that $f$
is irreducible.  Let $a,b\in \Q$ be such that $0 \leq a <b$.  Let
$c,d \in \N$.  For all $n\in \N$ we define
\[I(n):=\{k \in \N : an  \leq  k\leq bn\} \cap (c\N+d) \] and
\[F_n:= \prod_{k\in I(n)} f(k).\]

Informally speaking, the following theorem gives effective
super-linear lower bounds on the growth of the function that maps $n$
to the greatest prime divisor of $F_n$.  The result itself and the
proof are a slight generalisation of~\cite[Theorem 5.1]{everest07}.  The
main difference is that we permit $I(n)$ to be the intersection of an
interval and an arithmetic progression, whereas 
the work cited above considers unrefined intervals~$I(n)=\{1,\ldots,n\}$.

\begin{restatable}{theorem}{theobigprimespolyT} 
\label{theo:bigprimespolyT}
Let $M \in \N$. There exists an effectively computable bound~$B\in \N$
such that for all \(n>B\)  there exists a prime~$p>Mn$ that divides $F_n$.
\end{restatable}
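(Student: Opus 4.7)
The plan is to compare $\log |F_n|$ against the contribution to $\log |F_n|$ coming from primes $p \le Mn$, and show that the latter is strictly smaller than the former for $n$ large enough. This forces the existence of a prime $p > Mn$ dividing $F_n$. The argument is a restricted-interval variant of the proof of~\cite[Theorem~5.1]{everest07}, with the non-trivial twist being that $I(n)$ is the intersection of an arithmetic progression with an interval, so $|I(n)| = (b-a)n/c + O(1)$ rather than $\approx n$.

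First I would obtain the lower bound $\log |F_n| = \sum_{k \in I(n)} \log f(k) = 2|I(n)| \log n + O(|I(n)|)$, using that $f(k) = k^2 + \beta$ and $k \asymp n$ on $I(n)$. Next I would bound the total contribution of small primes,
\begin{equation*}
S(n) := \sum_{p \le Mn} v_p(F_n) \log p = \sum_{p \le Mn} \log p \sum_{k \in I(n)} v_p(f(k)).
\end{equation*}
Mimicking the proof of \autoref{prop:moll} but applied to $F_n$ in place of $u_n$, and exploiting that $f = x^2 + \beta$ has either $0$ or $2$ roots modulo a Hensel prime $p$ according as $-\beta \notin \mathcal{L}_p$ or $-\beta \in \mathcal{L}_p$, I would show that for Hensel primes
\begin{equation*}
v_p(F_n) \le \frac{2|I(n)|}{p-1} + O\!\left(\frac{\log n}{\log p}\right),
\end{equation*}
and only when $-\beta \in \mathcal{L}_p$; otherwise $v_p(F_n) = 0$. (The slight generalisation from intervals to arithmetic progressions intersected with intervals only affects the constants in \eqref{eq-m-AP}, since for each $e \ge 1$ the set $\{k \in I(n) : p^e \mid k - \beta_i\}$ remains the intersection of an arithmetic progression with an interval, and the number of such $k$ is $|I(n)|/p^e + O(1)$.) Summing over Hensel primes and handling the finitely many non-Hensel primes as a bounded error yields
\begin{equation*}
S(n) \le 2|I(n)| \sum_{p \le Mn,\, -\beta \in \mathcal{L}_p} \frac{\log p}{p-1} + O(|I(n)|) + O\!\left(\pi(Mn) \log n\right).
\end{equation*}

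Applying \autoref{prop:apostol_primes_beta} to the first sum gives $2|I(n)| \cdot \tfrac{1}{2}\log(Mn) + O(|I(n)|) = |I(n)| \log n + O(|I(n)|)$, while $\pi(Mn)\log n = O(n)$ by the Prime Number Theorem (\autoref{thm:pnt}). Comparing with the lower bound yields
\begin{equation*}
\log |F_n| - S(n) \ge |I(n)| \log n + O(|I(n)|) + O(n),
\end{equation*}
which is strictly positive for all $n$ exceeding some effectively computable threshold $B$. Since $\log |F_n| = \sum_p v_p(F_n) \log p$, the residual contribution $\log |F_n| - S(n)$ comes from primes $p > Mn$, and positivity forces at least one such prime to divide $F_n$. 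The main technical obstacle is making every error term in the chain above explicit and uniform in $n$ (though not in $p$), so that the threshold $B$ can genuinely be extracted; the coefficient $\tfrac{1}{2}$ in \autoref{prop:apostol_primes_beta} is precisely what provides the gap between $|I(n)| \log n$ and $2|I(n)| \log n$, and is essential to the whole strategy.
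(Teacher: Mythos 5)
Your proposal is correct and mirrors the paper's proof in all essential respects. The three key ingredients coincide: a Stirling-type lower bound $\log F_n \geq 2A(n\log n - n)$ with $A = (b-a)/c$ (your $|I(n)|\log n$ formulation is equivalent since $|I(n)| = An + O(1)$); a Moll-type upper bound $v_p(F_n) \leq 2|I(n)|/(p-1) + O(\log n/\log p)$ for Hensel primes, obtained by counting membership in arithmetic progressions of modulus $\mathrm{lcm}(c,p^e)$ inside $I(n)$; and Selberg's refinement of Mertens' theorem (\Cref{prop:apostol_primes_beta}) to extract the crucial factor of $\tfrac12$ from the quadratic-residue restriction, which is the whole source of the gap between $An\log n$ and $2An\log n$.

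The one organisational difference is that you apply the Moll-type valuation bound uniformly to every prime $p \leq Mn$, whereas the paper splits this range in two: for $p < n$ it argues as you do (\Cref{claim-small-prime-F}), while for $n \leq p \leq Mn$ it gives a separate elementary argument (\Cref{claim-average-prime}) showing $e_p \leq 2$, because at most two $k \in I(n)$ can satisfy $p \mid k^2 + \beta$ when $p > (b-a)n$. Both routes yield the same $O(n)$ contribution from this range; yours is marginally more streamlined, the paper's is slightly more self-contained for the medium primes since it avoids invoking Hensel's lemma there. One small point in your favour: the relevant quadratic-residue condition is on $-\beta$, since $p \mid k^2 + \beta$ forces $k^2 \equiv -\beta \pmod p$, and you state this correctly, while the paper contains a sign slip at the corresponding step (it writes that $\beta$ is a quadratic residue modulo $p$). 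The slip does not affect the paper's conclusion, since \Cref{prop:apostol_primes_beta} is applied with $a$ not a perfect square, and the standing hypothesis is precisely that $-\beta$ is not a square.
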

\begin{proof}
Given $n\in \N$, we have the prime factorisation $F_n = \prod_{p} p^{e_p}$ 
where $e_p:= v_p(F_n)$ for each prime $p$.  Note that $e_p=0$ for
all but finitely many $p$.
Taking logarithms, we get
\begin{equation*}
	\log(F_n) =\sum_{p} e_p \log p . 
\end{equation*}
Partitioning the above sum into a sub-sum over primes at most $Mn$ and a sub-sum over primes greater than $Mn$, we obtain
\begin{equation}
\label{eq:logunn}
	\sum_{p >Mn}  e_p \log p = \log(F_n) - 	\sum_{p \leq Mn}   e_p
        \log p .
\end{equation} 

The theorem at hand follows from a lower bound on the sum
$\sum_{p >Mn} e_p \log p$ on the left-hand side of~\eqref{eq:logunn}.
To this end we have two sub-goals: give a lower bound on $\log(F_n)$
and an upper bound on $\sum_{p \leq Mn} e_p \log p$.

Write $A:=\frac{b-a}{c}$.
The following lower bound on $\log(F_n)$ is a consequence of Stirling's formula.
The proof is in Appendix~\ref{sec:app-lowerbownun}.
\begin{claim} 
  \label{claim:sizefunnyQ} We have the bound
  $\log(F_n) \geq 2A  (n\log n - n)$.
\end{claim}

The next task is give an upper bound on $\sum_{p \leq Mn}   e_p \log p$.
Here we follow the approach in~\cite{everest07} and further partition the sum into 
those primes $p<n$ (treated in~\Cref{claim-small-prime-F}) and those primes 
\(n\le p\le Mn\) (treated in~\Cref{claim-average-prime}).

\begin{claim}
\label{claim-small-prime-F}
There exist positive constants \(\varepsilon, n_0>0\) such that if \(n>n_0\), then
\begin{equation*}
  \sum_{p <n}  e_p \log p \leq  An\log n  + \varepsilon n. 
\end{equation*}
\end{claim}
\begin{proof}
Let $S_n$ be the set of primes $p<n$ such that 
$p$ divides $F_n$ and $p$ is a Hensel prime for $f$.
Observe that 
\[ \sum_{p < n}  e_p \log p - \sum_{p\in S_n}
e_p \log p \leq \varepsilon_0 \log n\]
for an effective constant $\varepsilon_0$.
Indeed, if $p<n$ is a prime divisor of ${F_n}$ 
that does not lie in $S_n$ then $p$ divides the discriminant 
of $f$---and there are finitely many such primes.
Thus to prove the claim it will suffice to show the following bound for some effective constant $\varepsilon_1$:
\begin{equation} \sum_{p \in S_n}  e_p \log p \leq
  An\log n  + \varepsilon_1  n. 
\label{eq:SUM2}
\end{equation}

For \(p\in S_n\), we establish an upper bound on $e_p$ which follows from
\Cref{prop:moll}:
\begin{equation}
  e_p \leq  \frac{2An}{p-1}+  \frac{\varepsilon_2 \log n}{\log p}.
\label{eq:BOUND}
  \end{equation}
  Here the constant $\varepsilon_2$ is effective
and independent of the prime $p$.
The justification is given in~\Cref{sec:app-lowerbownun}.

We next argue that there exist 
effective constants $\varepsilon_3,\varepsilon_4,n_1>0$  such that the following chain of inequalities is valid 
 for all $n\geq n_1$.  
 We have that
 \begin{eqnarray*}
 	\sum_{p\in S_n} e_p \log p & \leq & \sum_{p \in S_n} \left( \frac{2An }{p-1}+\varepsilon_2 \, \frac{\log n}{\log p} \right) \log p
  \qquad\mbox{(by~\eqref{eq:BOUND})}\\
 	& \leq & \, 2An \sum_{p \in S_n} \frac{\log p }{p-1} + \varepsilon_2 \pi(n) \log n\\
 	&\leq &   2An \sum_{p \in S_n} \frac{\log p }{p-1} + \varepsilon_3 n
                \qquad \mbox{(by \autoref{thm:pnt})}\\
   &=& 2An \sum_{p \in S_n} \frac{\log p }{p} \left(1+\frac{1}{p-1}\right)+ \varepsilon_3 n\\
 	& \leq  & 2An \sum_{p \in S_n} \frac{\log p }{p}+ \varepsilon_4.
  \end{eqnarray*} 

  No prime in $S_n$ divides the discriminant of $f$.  Since the latter
  is equal to $-4\beta$, no prime in $S_n$ divides $\beta$.
  In addition, every prime in $S_n$ is a divisor of $F_n$; i.e., a
  divisor of $k^2+\beta$ for some $k\in I(n)$, we have that $\beta$ is
  a quadratic residue modulo $p$ for every prime $p \in S_n$.
 Thus, for sufficiently large \(n\), we have that
\[ \sum_{p \in S_n} \frac{\log p}{p} \leq \frac{1}{2} \log n + \varepsilon_5\]
(by \Cref{prop:apostol_primes_beta})
for some effective constant $\varepsilon_5$.

The desired bound~\eqref{eq:SUM2} follows by combining the previous two inequalities and fixing \(\varepsilon_1 \ge 2A\varepsilon_5 + \varepsilon_4\).
\end{proof}

\begin{claim}
\label{claim-average-prime}
	There exist effectively computable constants \(n_0,\varepsilon>0\) 
	such that if \(n>n_0\),
then
\begin{equation*} 
\sum_{n\leq p \leq Mn}  e_p \log p \leq \varepsilon n.
\end{equation*}
\end{claim}

\begin{proof}
  Let $n \in \N$.  Suppose that \(p> (b-a)n\) is a prime divisor of
  \(F_n\).  For such primes, we shall first show that
  $e_p := v_p(F_n) \le 2$.  Assume, for a contradiction, that there
  are distinct integers $k_1 < k_2 < k_3$ in $I(n)$ such that $p$
  divides $k_1^2+\beta$, $k_2^2+\beta$, and $k_3^2+\beta$.  Then
  $p \mid k_1^2 - k_2^2$.  
  Since \(p\) is prime, either
  $p \mid k_1-k_2$ or \(p \mid k_1+k_2\).  Since
  \(0< k_2-k_1 < (b-a)n \le p\), we deduce that $p \mid k_1+k_2$.  By
  symmetric reasoning we have that $p \mid k_2+k_3$.  Thus $p$ must
  also divide $(k_2+k_3)-(k_1+k_2) = k_3 - k_1$.  However, this leads
  to a contradiction since $p \geq (b-a)n \geq k_3-k_1$.  Hence for
  each prime divisor \(p\mid F_n\) with \(p\ge (b-a)n\), we find that
  $e_p = v_p(F_n) \le 2$.

Thus we bound the summation in the statement of the claim by
\begin{equation*} 
    \sum_{n< p \leq Mn}  e_p \log p
    \le \sum_{p\le Mn} 2 \log p
    \le 2 \log(Mn) \pi(Mn).
\end{equation*}
The desired result follows from the estimate on \(\pi(x)\) given by the Prime Number Theorem  (\autoref{thm:pnt}). 
\end{proof}

We return to the proof of \autoref{theo:bigprimespolyT}.  
From
Equation~\eqref{eq:logunn}, \cref{claim-small-prime-F}, and \cref{claim-average-prime}, there exist positive constants
\(\varepsilon,n_0>0\) such that if \(n>n_0\) then
\begin{equation*}
\sum_{p > Mn}  e_p \log p \geq An\log n - \varepsilon n.
\end{equation*}

In turn, the above lower bound entails that for sufficiently
large~\(n\), there exist prime divisors \(p \mid F_n\) such that \(p>
Mn\).  This concludes the proof.
\end{proof}

\section{Decidability: different splitting fields}
\label{sec-unequalsplitting}
In this section we show decidability of the Membership Problem for recurrence sequences that satisfy a first-order relation of the form \eqref{eq:rel} subject to the condition that the polynomial coefficients 
$f,g\in \Z[x]$ have different splitting fields.
To this end, it is useful to introduce the following terminology.
Let $p$ be a Hensel prime for $fg$.
We say that the recurrence~\eqref{eq:rel} is 
\emph{$p$-symmetric} if 
the two polynomials $f$ and $g$ have the same number of roots 
in $\Z/p\Z$.  Otherwise we say that the recurrence is 
\emph{$p$-asymmetric}.

We first show decidability of the Membership Problem in the case
of $p$-asymmetric recurrences and then we apply the Chebotarev Density Theorem to show that every recurrence in which $f$ and $g$ 
have different splitting fields is $p$-asymmetric for infinitely many primes $p$.

\begin{lemma}\label{lem-p-assym}
There is a procedure to decide the Membership Problem 
for the class of hypergeometric sequences whose defining recurrences 
are $p$-asymmetric for some prime $p$.
\end{lemma}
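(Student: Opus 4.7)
My plan is to exploit the $p$-adic valuation $v_p(u_n)$ to separate the sequence from the target $t$ for all sufficiently large $n$. Since the instance is in standard form, I may assume $u_0 = 1$ and $t \neq 0$, and in particular $v_p(t)$ is a fixed finite integer. Writing
\[
u_n \;=\; \prod_{k=1}^{n} \frac{g(k)}{f(k)}, \qquad \text{hence} \qquad v_p(u_n) \;=\; \sum_{k=1}^{n} v_p(g(k)) - \sum_{k=1}^{n} v_p(f(k)),
\]
I can analyse each sum separately by introducing the auxiliary monic-recurrence sequences $a_n = \prod_{k=1}^{n} f(k)$ and $b_n = \prod_{k=1}^{n} g(k)$, which satisfy $v_p(u_n) = v_p(b_n) - v_p(a_n)$.

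The key step is to apply \Cref{prop:moll} to both $a_n$ and $b_n$. By hypothesis $p$ is a Hensel prime for $fg$, so it is a Hensel prime for both $f$ and $g$ individually; let $m_f$ and $m_g$ denote the number of roots of $f$ and $g$ in $\Z/p\Z$, with $m_f \neq m_g$ by $p$-asymmetry. The proposition then yields an effectively computable constant $\varepsilon > 0$ and threshold $n_0$ such that, for $n > n_0$,
\[
v_p(u_n) \;=\; \frac{(m_g - m_f)\, n}{p-1} \;+\; R(n), \qquad |R(n)| \leq \frac{(m_f + m_g)\,\varepsilon \log n}{\log p}.
\]
Because the leading coefficient $(m_g - m_f)/(p-1)$ is a non-zero rational, the linear term dominates the logarithmic error, and so $v_p(u_n) \to +\infty$ (if $m_g > m_f$) or $v_p(u_n) \to -\infty$ (if $m_g < m_f$) at an effectively known rate.

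From this asymptotic I extract an effectively computable bound $B \geq n_0$ such that $v_p(u_n) \neq v_p(t)$ for every $n > B$; in particular $u_n \neq t$ for all such $n$. The algorithm then simply computes the finitely many terms $u_0, u_1, \ldots, u_B$ and checks each against $t$. The only non-routine ingredient is making the threshold $B$ explicit, but this is immediate from the effective form of \Cref{prop:moll}: solving the inequality $|(m_g - m_f)n/(p-1)| > (m_f + m_g)\varepsilon(\log n)/\log p + |v_p(t)|$ for $n$ produces $B$ in closed form. The main conceptual obstacle---obtaining a single prime that forces the valuations to drift---is handed to us directly by the hypothesis of $p$-asymmetry, so the argument is essentially mechanical once \Cref{prop:moll} is in hand.
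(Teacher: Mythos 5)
Your proof is correct and takes essentially the same approach as the paper: decompose $u_n$ into two monic-recurrence factorial products, apply \Cref{prop:moll} to each, and use $p$-asymmetry to get a nonzero linear drift in $v_p(u_n)$ that eventually separates it from the constant $v_p(t)$, yielding an effective threshold and a finite search.
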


\begin{proof}
Suppose that the hypergeometric sequence 
$\langle u_n \rangle_{n=0}^\infty$ satisfies 
the recurrence~\eqref{eq:rel} and moreover that
there is a prime 
$p$ with respect to which the recurrence is $p$-asymmetric.
We want to decide whether such a sequence reaches a given target value $t$.

Consider the sequences $\langle x_n \rangle_{n=0}^\infty$ and
$\langle y_n \rangle_{n=0}^\infty$ respectively defined by the monic
recurrences
$x_n=g(n)x_{n-1}$, $y_n=f(n)y_{n-1}$, with $x_0=y_0=1$.
Then $u_n = \frac{x_n}{y_n}$ and hence, for the aforementioned prime \(p\),
\[v_p(u_n)=v_p(x_n)-v_p(y_n) = \sum_{\ell=1}^n (v_p(g(\ell)) - v_p(f(\ell)))\]
by the multiplicative property.

Recall that $p$ is, by definition, a Hensel prime for both $f$ and $g$.
Hence,
by \Cref{prop:moll}, we obtain an asymptotic estimate of the form
\[|v_p(x_n)-v_p(y_n)| = \frac{|m_g-m_f|n}{p-1}  + O(\log n)\]
where \(m_f\) is the number of roots of \(f\) modulo \(p\) and \(m_g\) is defined similarly.
Here the implied constant depends on \(fg\) and \(p\).
The proof concludes by noting that $v_p(t)$ is a constant, whereas \(v_p(u_n)\) is bounded away from \(v_p(t)\) for sufficiently large \(n\) (note this threshold is computable).
We deduce that \(u_n\neq t\), again, for sufficiently large \(n\), from which the desired result follows.
\end{proof}

We now give a sufficient condition for a recurrence to be 
$p$-asymmetric.  We use the following 
consequence of the  Chebotarev Density Theorem.  
Let $\K$ be a Galois field of degree $d$ over $\Q$, and denote by $\OO$ its ring of integers. 
Let  $\Spl(\K)$ be the set of rational primes~$p$
such that  the ideal $p\OO$ 
totally splits in $\OO$, i.e., such that 
\(p\OO = \mathfrak{p}_1 \cdots \mathfrak{p}_d\)
where the $\mathfrak{p}_i$ are distinct prime ideals.
The following result appears as \cite[Corollary 8.39]{milneANT} and \cite[Corollary 13.10]{neukirch1999algebraic}.
	The latter reference attributes the result to Bauer.
\begin{theorem}
\label{theo-split-prime}
Let $\K$ and $\LL$ be  Galois extensions of~$\Q$ such that
$\K \neq \LL$. Then $\Spl(\K)$ and $\Spl(\LL)$ differ in infinitely many primes. 
\end{theorem}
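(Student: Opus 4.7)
The plan is to argue by contradiction. Suppose that $\Spl(\K)$ and $\Spl(\LL)$ differ in only finitely many primes, so that their symmetric difference $\Spl(\K) \bigtriangleup \Spl(\LL)$ is finite. I will derive $\K = \LL$, contradicting the hypothesis.

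The first key input is the Chebotarev Density Theorem, which gives that for any finite Galois extension $F/\Q$ of degree $d$, the set $\Spl(F)$ has natural density $1/d$ among the rational primes. The second key input is the standard compatibility of splitting with compositum: if a rational prime $p$ is unramified in the compositum $\K\LL$, then $p \in \Spl(\K\LL)$ if and only if $p \in \Spl(\K) \cap \Spl(\LL)$. This holds because $p$ splits completely in a Galois extension exactly when its Frobenius class is trivial, and under the natural embedding $\mathrm{Gal}(\K\LL/\Q) \hookrightarrow \mathrm{Gal}(\K/\Q)\times\mathrm{Gal}(\LL/\Q)$ the Frobenius is trivial iff both projections are. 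Only finitely many rational primes ramify in $\K\LL$, so this identity holds up to a finite exceptional set.

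Combining these observations, $\Spl(\K\LL)$ agrees with $\Spl(\K)\cap\Spl(\LL)$ up to a finite set, and under the contradictory assumption $\Spl(\K)\cap\Spl(\LL)$ agrees with both $\Spl(\K)$ and $\Spl(\LL)$ up to finite sets. Since natural density is insensitive to finite alterations, Chebotarev then yields
\[ \frac{1}{[\K\LL:\Q]} \;=\; \frac{1}{[\K:\Q]} \;=\; \frac{1}{[\LL:\Q]}, \]
so $[\K\LL:\Q] = [\K:\Q] = [\LL:\Q]$. The inclusions $\K,\LL \subseteq \K\LL$ together with equality of degrees force $\K = \K\LL = \LL$, the desired contradiction.

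The one point requiring some care, rather than a genuine obstacle, is the bookkeeping when absorbing the finite set of primes ramified in $\K\LL$ and the finite exceptional set from the hypothesis into the density computation; both contributions are bounded by fixed constants and so do not perturb natural density. An alternative route would be to invoke the classical Bauer theorem in the case $\K \subseteq \LL$ after first passing to Galois closures, but the direct density argument via the compositum seems cleanest and makes the role of Chebotarev explicit.
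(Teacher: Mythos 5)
Your proof is correct, but there is nothing in the paper to compare it to: the paper does not prove \Cref{theo-split-prime} at all, instead citing it as a known consequence of the Chebotarev Density Theorem (attributed to Bauer) from Milne's notes and Neukirch's \emph{Algebraic Number Theory}. The argument you give---pass to the compositum $\K\LL$, note that for primes unramified in $\K\LL$ one has $\Spl(\K\LL)=\Spl(\K)\cap\Spl(\LL)$ because the Frobenius in $\mathrm{Gal}(\K\LL/\Q)\hookrightarrow\mathrm{Gal}(\K/\Q)\times\mathrm{Gal}(\LL/\Q)$ is trivial precisely when both restrictions are, then apply Chebotarev to read off densities $1/[\K\LL:\Q]=1/[\K:\Q]=1/[\LL:\Q]$ and conclude $\K=\K\LL=\LL$---is essentially the standard proof that appears in those references, so you have supplied the omitted proof rather than found a new route. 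One small point worth making explicit, since the hypothesis only says $\Spl(\K)$ and $\Spl(\LL)$ have finite symmetric difference: observe that $\Spl(\K)\setminus\bigl(\Spl(\K)\cap\Spl(\LL)\bigr)=\Spl(\K)\setminus\Spl(\LL)$ is contained in the symmetric difference, so $\Spl(\K)\cap\Spl(\LL)$ indeed differs from each of $\Spl(\K)$ and $\Spl(\LL)$ by a finite set, which is what you need before invoking the density computation; you gesture at this but it deserves a sentence.
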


We state the main theorem of this section.
\begin{restatable}{theorem}{theodistinctsplit} 
\label{theo-distinctsplit}
	There is a procedure to decide the Membership Problem 
for the class of hypergeometric recurrences~\eqref{eq:rel} whose polynomial coefficients have different splitting fields.
\end{restatable}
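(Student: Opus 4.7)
The plan is to reduce the theorem to \Cref{lem-p-assym} by exhibiting, for any recurrence whose polynomial coefficients $f,g\in\Z[x]$ have distinct splitting fields $\K_f\neq\K_g$, an explicit prime $p$ with respect to which the recurrence is $p$-asymmetric. \Cref{lem-p-assym} then supplies the decision procedure. The source of the asymmetric prime will be \Cref{theo-split-prime}, applied to the Galois extensions $\K_f$ and $\K_g$.

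First, I would establish the bridge between Galois-theoretic splitting and the combinatorial notion of $p$-asymmetry. Specifically, for any polynomial $h\in\Z[x]$ and any Hensel prime $p$ of $h$, the number of roots of $h$ modulo $p$ equals $\deg h$ if and only if $p\in\Spl(\K_h)$, where $\K_h$ denotes the splitting field of $h$. The forward direction uses Hensel's Lemma to lift $\deg h$ distinct roots modulo $p$ to $\deg h$ distinct roots in $\Z_p$, forcing $\K_h$ to embed into $\Q_p$ and hence $p$ to split completely in $\K_h$. The reverse direction uses that total splitting of $p$ in $\K_h$ makes every Frobenius at a prime above $p$ trivial, so $h$ splits completely over $\Z/p\Z$ with $\deg h$ distinct roots.

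Next, assuming standard form (S4) so that $\deg f=\deg g$, I would deduce the existence of a $p$-asymmetric prime. Since $\K_f\neq \K_g$ are distinct Galois extensions of $\Q$, \Cref{theo-split-prime} furnishes infinitely many primes in the symmetric difference $\Spl(\K_f)\mathop{\triangle}\Spl(\K_g)$; excluding the finitely many primes that fail to be Hensel for $fg$ still leaves infinitely many candidates. For any such $p$, say with $p\in\Spl(\K_f)\setminus\Spl(\K_g)$, the polynomial $f$ has $\deg f$ roots modulo $p$ while $g$ has strictly fewer than $\deg g=\deg f$ roots modulo $p$; hence the numbers of roots of $f$ and $g$ in $\Z/p\Z$ differ, which is precisely $p$-asymmetry.

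The algorithm is then to enumerate primes $p$ in increasing order, discard those dividing the discriminant of some irreducible factor of $fg$, and test whether the numbers of roots of $f$ and $g$ in $\Z/p\Z$ differ; termination is guaranteed by the preceding paragraph. On finding such a $p$, invoke \Cref{lem-p-assym}. The main obstacle I anticipate is a careful treatment of the Galois-theoretic bridge in the reducible case: one must observe that $\K_f$ is the compositum of the splitting fields of the irreducible factors of $f$, so that $p\in\Spl(\K_f)$ iff every irreducible factor of $f$ splits completely modulo $p$ iff $f$ itself has $\deg f$ roots in $\Z/p\Z$, where the final equivalences rely on $p$ being unramified in $\K_f$ (which follows from $p$ being Hensel for $f$).
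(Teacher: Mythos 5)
Your proposal is correct and follows essentially the same route as the paper: apply \Cref{theo-split-prime} to obtain a Hensel prime for $fg$ that lies in exactly one of $\Spl(\K)$ and $\Spl(\LL)$, observe that such a prime makes the recurrence $p$-asymmetric, and invoke \Cref{lem-p-assym}. The only difference is that you spell out the bridge the paper leaves implicit, namely that for a Hensel prime $p$ of $h$ one has $p\in\Spl(\K_h)$ if and only if $h$ has $\deg h$ roots modulo $p$ (counted with multiplicity), using that $\K_h$ is the compositum of the splitting fields of the irreducible factors of $h$.
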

\begin{proof}[Proof of \autoref{theo-distinctsplit}]
Let $\langle u_n\rangle_{n=0}^\infty$ satisfy a recurrence~\eqref{eq:rel} for which the coefficients $f$ and $g$ have respective splitting fields $\K$ and $\LL$, with $\K\neq \LL$.
Recall that there are only finitely many primes that are not Hensel
primes for~$fg$. 
By Theorem~\ref{theo-split-prime}, there exists a Hensel prime for $fg$ that lies in exacly one of the two sets 
$\Spl(\K)$ and $\Spl(\LL)$.
For such a prime $p$, the recurrence~\eqref{eq:rel} is $p$-asymmetric.  Hence the result follows 
from~\Cref{lem-p-assym}.
\end{proof}

We note that the recurrence~\eqref{eq:rel}
can be $p$-asymmetric even when $f$ and $g$ have the same splitting field.
We demonstrate this phenomenon with the following example.

\begin{example}
Let $\langle u_n \rangle_{n=0}^\infty$ be the hypergeometric sequence defined by
\begin{equation}
\label{exam-rec-1}
f(n)u_{n} - g(n)u_{n-1} = 0 \quad \text{and} \quad u_0=1,
 \end{equation}
where
	\begin{equation*}
 f(x) :=  (x^2+1)(x^2-2) \quad \text{ and } \quad
 g(x) :=  x^4-2x^2+9.
 \end{equation*}

It is easily checked that both $f$ and $g$ have
splitting field $\Q(\sqrt{2},\iu)$.  However 
we show that $f$ and $g$ have different numbers of roots in~$\Z/7\Z$, i.e., 
the recurrence~\eqref{exam-rec-1} is $7$-asymmetric. 
	   
It is straightforward to verify that $7$ is a Hensel prime for~$fg$ by noting that it does not divide the discriminants of the respective irreducible factors of $f$ and $g$.
To show that the recurrence is $7$-asymmetric, observe first that $f$ factors as $(x+4)(x+3)(x^2+1)$
over $\Z/7\Z$, where~$x^2+1$ is irreducible; thus $f$ has two roots in $\Z/7\Z$.
On the other hand, \(g\) 
factors into a 
pair of irreducible quadratic polynomials over 
$\Z/7\Z$ and hence has no roots.

We can now follow the argumentation of 
\autoref{lem-p-assym} to
decide the Membership Problem for 
$\langle u_n\rangle_{n=0}^\infty$ with respect to any given target~$t \in \Q$.
\color{black}
Consider the monic recurrences  \(x_n = g(n)x_{n-1}\)
and \(y_n = f(n)y_{n-1}\), with initial conditions $x_0=y_0=1$.  Note that $v_7(u_n)=v_7(y_n)-v_7(x_n)$. 
Since $g$ has no roots in $\Z/7\Z$,  $v_p(g(k))=0$
for all integers $k> 0$. It follows that 
\(v_7(x_n) = \sum_{k=1}^n v_7(g(k)) = 0  \)
and hence that $v_7(u_n)=v_7(y_n)$.

To obtain bounds on  $v_7(y_n)$, note that \(|f(k)|\le n^4\) for all \(n\ge 2\) and \(1\le k\le n\).
\autoref{prop:moll} gives the inequality
    \begin{equation*}
        \frac{2n}{6}  - \frac{10\log n}{\log 7} \le  v_7(y_n) . 
    \end{equation*}
    For any target $t\in \Q$, the above bound allows us to compute a threshold $B$ such that for all $n>B$ we have 
Since \(v_7(u_n)=v_7(y_n)>v_7(t)\) and hence $u_n\neq t$.
\end{example}

\section{Decidability: quadratic splitting fields}
\label{sec-quad}

In this section, we focus on the  decidability of the Membership Problem
for recurrences 
\begin{equation}
f(n)u_n - g(n)u_{n-1}=0, \qquad u_0=1
\tag{\ref{eq:rel}}
\end{equation}
in which both $f,g\in \Z[x]$ are monic 
and split completely over a quadratic (degree-two) extension $\K$ of $\Q$.

Recall that a number  field \(\K\) is \emph{quadratic} if and only if there is a square-free integer \(\beta\) such that \(\K = \Q(\sqrt{\beta})\).
The assumption that \(f\) and \(g\) are both monic ensures that the roots of both polynomials are algebraic integers 
in \(\Q(sqrt{\beta}) \).
As shown in~\cite[Chapter 3]{stewart2016algebraic}, the following holds.

\begin{theorem} \label{thm:quadratic}
Suppose that $\beta\in\Z$  is  square-free.
Then the ring of algebraic integers in $\Q(\sqrt{\beta})$ has the form $\Z[\theta]$, where
\[\theta= \begin{cases}
	\sqrt{\beta} & \text{if }\beta \not\equiv 1 \pmod{4},\\[3pt]
	 \textstyle\frac{\sqrt{\beta}-1}{2} & \text{if } \beta\equiv 1 \pmod{4}.
\end{cases}
\]
\end{theorem}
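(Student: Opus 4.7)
The plan is to characterise the algebraic integers in $\Q(\sqrt{\beta})$ directly via the minimal polynomial and then check that the resulting set coincides with $\Z[\theta]$. Writing an arbitrary element as $\alpha = a + b\sqrt{\beta}$ with $a,b\in\Q$, the case $b=0$ reduces to the elementary fact that $a$ is an algebraic integer iff $a\in\Z$. For $b\neq 0$, the minimal polynomial of $\alpha$ over $\Q$ is $x^2 - 2ax + (a^2 - b^2\beta)$, so $\alpha$ is an algebraic integer precisely when $2a\in\Z$ and $a^2 - b^2\beta \in \Z$.

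I would then extract denominators. Setting $a = a'/2$ with $a'\in\Z$, the second condition rewrites as $(a')^2 - 4b^2\beta \in 4\Z$, and in particular $4b^2\beta \in \Z$. Writing $b = p/q$ in lowest terms, the assumption that $\beta$ is square-free (together with $\gcd(p,q)=1$) forces $q \mid 2$, so $b = b'/2$ with $b'\in\Z$. The remaining integrality constraint becomes the congruence $(a')^2 \equiv (b')^2\beta \pmod 4$.

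A parity case split now yields the theorem. The squares modulo $4$ are $0$ and $1$. If $a'$ is even then $(b')^2\beta \equiv 0 \pmod 4$, and square-freeness of $\beta$ forces $b'$ even; hence $a,b\in\Z$ and $\alpha\in\Z[\sqrt{\beta}]$. If $a'$ is odd then $(b')^2\beta \equiv 1\pmod 4$, which forces $b'$ odd \emph{and} $\beta \equiv 1 \pmod 4$. Consequently, if $\beta \not\equiv 1 \pmod 4$ the odd case is impossible and the ring of integers equals $\Z[\sqrt{\beta}] = \Z[\theta]$. If $\beta \equiv 1\pmod 4$, set $\theta = (\sqrt{\beta}-1)/2$; a direct calculation gives $\theta^2 + \theta + (1-\beta)/4 = 0$, a monic polynomial with integer coefficients, so $\theta$ is itself an algebraic integer. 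Conversely, every allowed $\alpha = a + b\sqrt{\beta}$ with $2a,2b\in\Z$ of the same parity can be rewritten as $\alpha = (a+b) + 2b\,\theta$, which lies in $\Z[\theta]$ since $a+b\in\Z$ and $2b\in\Z$.

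I do not expect a serious obstacle: the argument is wholly elementary once the minimal polynomial viewpoint is adopted. The only delicate step is the passage from $4b^2\beta\in\Z$ to $2b\in\Z$, where square-freeness of $\beta$ is used essentially; I would record this as a short explicit divisibility argument rather than leaving it implicit.
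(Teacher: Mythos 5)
Your proof is correct, and it is essentially the standard textbook argument; the paper itself does not supply a proof but merely cites Stewart and Tall (Chapter~3), where the same trace-and-norm characterisation followed by the $\bmod\ 4$ parity analysis appears. Your one flagged delicate step (from $4b^2\beta\in\Z$ to $2b\in\Z$ via square-freeness) is exactly the divisibility argument needed, and the explicit rewriting $a+b\sqrt{\beta}=(a+b)+2b\theta$ correctly closes the $\beta\equiv 1\pmod 4$ case.
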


The main result of the section is as follows.

\begin{theorem}
\label{theo-decide-quad}
The Membership Problem for recurrences of the
form~\eqref{eq:rel} is decidable under the assumption that $f,g$
are both monic and both split over a  quadratic extension $\K$ of
$\Q$.
\end{theorem}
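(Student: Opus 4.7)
My plan is to reduce the problem to a canonical setting, separate two sub-cases based on the factor counts of $f$ and $g$, and treat the harder sub-case through a counting argument that marries \Cref{theo:bigprimespolyT} with the common-splitting-field structure.

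First I would apply the standard form reductions of \Cref{sec-pre} so that $f,g$ are monic of the same degree $d$, with $u_0 = 1$ and $t\in\Q^\times$, and invoke \Cref{theo-distinctsplit} to assume that $f$ and $g$ share the same splitting field $\K = \Q(\sqrt{\beta})$ with $\beta\in\Z$ square-free. Over $\Z$, each irreducible factor of $f$ or $g$ is then either linear or an irreducible quadratic with discriminant of the form $\beta m^2$ for some $m\in\Q^\times$. Cancelling common $\Z$-irreducible factors of $f$ and $g$ (each such factor $h$ contributes $h(k)$ identically to both numerator and denominator of $u_n/u_{n-1}$), I may further assume $\gcd(f,g)=1$ in $\Z[x]$. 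Let $r_f,\ell_f$ (resp.\ $r_g,\ell_g$) denote the number of irreducible quadratic and linear factors of $f$ (resp.\ $g$), so that $\ell_f + 2r_f = \ell_g + 2r_g = d$. The boundary case $r_f = r_g = 0$, in which $f$ and $g$ split over $\Q$, is covered by the earlier work~\cite{NPSW022} and can be set aside.

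Next I split according to whether the factor counts match. If $r_f\ne r_g$ then $\ell_f\ne \ell_g$, and for every Hensel prime $p$ for $fg$ with $\beta\notin \mathcal{L}_p$---an infinite set by quadratic reciprocity---each irreducible quadratic factor has no root modulo $p$, giving $m_f = \ell_f \ne \ell_g = m_g$. Hence the recurrence is $p$-asymmetric and \Cref{lem-p-assym} concludes. In the remaining case $r_f = r_g$, $\ell_f = \ell_g$, and since $\gcd(f,g)=1$ there is an irreducible quadratic factor $q(x) = x^2 + bx + c$ of $g$ that does not divide $f$, with roots $\gamma,\bar\gamma\in\OO_\K$ and $b^2-4c=\beta m_q^2$. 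For each sufficiently large $n$, the substitution $y = 2k+b$ transforms $\prod_{k\in I(n)} q(k)$ (up to a harmless power of two) into $\prod_{y\in I'(n)} \bigl(y^2 - (b^2-4c)\bigr)$, to which \Cref{theo:bigprimespolyT} applies (the announced mild generalisation). Choosing $I(n)\subseteq[1,n]$ and any $M>0$, this furnishes a prime $p > Mn$ dividing $\prod_{k\in I(n)} q(k)$, hence dividing $\prod_{k=1}^n g(k)$. Then $v_p\bigl(\prod_{k=1}^n g(k)\bigr)\ge 1$, and for $p$ exceeding the finitely many primes dividing the numerator or denominator of $t$ we have $v_p(t)=0$, so $u_n\ne t$ provided $v_p\bigl(\prod_{\ell=1}^n f(\ell)\bigr)=0$.

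The main obstacle is precisely this last requirement: ensuring that the prime $p$ does not cancel in the denominator product. Since $q\nmid f$, the algebraic integer $R := \prod_{\alpha}(\gamma-\alpha)(\bar\gamma-\alpha)\in \OO_\K$ (product over the roots $\alpha$ of $f$) is non-zero, so only finitely many primes $p$ admit $\gamma\equiv\alpha\pmod{\mathfrak{p}}$ for some root $\alpha$ of $f$ and some $\mathfrak{p}$ above $p$; these exceptional primes are excluded by taking $M$ large. Nevertheless, for a non-exceptional prime $p > Mn$ dividing $q(k_0)$ it may still happen that some root $\alpha$ of $f$ has its residue modulo $\mathfrak{p}$ coincidentally lying in $[1,n]$, contributing to $v_p\bigl(\prod f(\ell)\bigr)$. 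I plan to control this by a weighted counting argument: by the reasoning behind \Cref{theo:bigprimespolyT} (via \Cref{claim:sizefunnyQ,claim-small-prime-F,claim-average-prime}) the sum $\sum_{p > Mn,\, p\mid \prod_{k\in I(n)} q(k)} \log p$ is bounded below by $\Omega(n\log n)$, and the freedom to tune the arithmetic progression $(c,d)$ in $I(n)$ should allow me to force the primes that also divide $\prod f(\ell)$ to contribute a strictly smaller weighted total, thereby guaranteeing the existence of a prime $p$ of the required form. Making this quantitative separation precise---and in particular exploiting the arithmetic progression parameter to funnel away primes for which $f$ has small roots modulo $p$---is the technical heart of the argument.
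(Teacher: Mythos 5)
Your high-level plan follows the right instinct---cancel shared structure, reduce via $p$-asymmetry when the number of irreducible quadratic factors of $f$ and $g$ differ, and otherwise hunt for a large prime $p$ with $v_p(u_n)\neq 0$ using \Cref{theo:bigprimespolyT}. The sub-case $r_f\neq r_g$ via quadratic non-residues is clean and correct. However, the remaining sub-case $r_f=r_g$ has two genuine gaps.

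First, your dichotomy $r_f=r_g$ versus $r_f\neq r_g$ is coarser than what the argument needs, and it lets through instances where the big-prime strategy provably fails. Take $f(x)=x^2+x-1$ and $g(x)=f(x+1)$ over $\K=\Q(\sqrt 5)$. Here $\gcd(f,g)=1$ in $\Z[x]$ and $r_f=r_g=1$, so you would reach for your final argument; but $u_n=f(n+1)/f(1)$ is a rational function of $n$, and any prime $p$ you find dividing $g(k_0)=f(k_0+1)$ for $k_0\le n-1$ also divides $\prod_{\ell\le n}f(\ell)$, so $v_p(u_n)=0$. The paper handles this by partitioning the roots of $fg$ into \emph{classes} (roots differing by a rational integer) and treating the case where every class is \emph{balanced} separately: there $u_n$ is eventually a rational function and membership reduces to finding integer roots of a fixed polynomial. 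You need this reduction before a big-prime argument can have any hope.

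Second, and more centrally, you flag the requirement $v_p\bigl(\prod_{\ell\le n}f(\ell)\bigr)=0$ as "the technical heart" and sketch a weighted-counting idea, but that idea is not on solid ground: the lower bound $\sum\log p=\Omega(n\log n)$ from \Cref{claim:sizefunnyQ} and the trivial upper bound $\log\prod q(k)=O(n\log n)$ have matching orders, so you cannot simply argue that "most" of these primes avoid $\prod f(\ell)$ without a sharper comparison, and tuning the arithmetic progression $(c,d)$ by itself does not control where the residues of the roots of $f$ modulo the chosen prime land. The paper's key idea is qualitatively different: it fixes a linear order $\prec$ on all roots of $fg$, takes the least unbalanced class $\mathcal C_0$ with top element $\alpha_0=a_0\theta+b_0$, and chooses $I(n)$ so that the resulting $\theta'$ satisfies $(a_0+\tfrac13)\theta'\le n\le(a_0+\tfrac23)\theta'$. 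With this calibration, the embedding $\varphi:\mathcal R\to\{0,\dots,p-1\}$ maps exactly the roots $\alpha\preccurlyeq\alpha_0$ into $\{1,\dots,n\}$; since that set is unbalanced by construction, $v_p(u_n)\neq 0$ is forced, with no need to estimate the frequency of "bad" primes at all. So one does not try to avoid primes that also hit $f$---one arranges the interval so that the net contribution of $f$'s and $g$'s roots in $\{1,\dots,n\}$ is nonzero. This is the missing ingredient your proposal would need to be completed.
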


The proof of Theorem~\ref{theo-decide-quad} is given in \Cref{subsec-partition-roots,subsec-thre,subsec-primediv,subsec-conclud}.
The details differ slightly according to the two cases
for the generator $\theta$ of the ring of integers of $\K$, as presented in
Theorem~\ref{thm:quadratic}.  
In the subsections below, we treat the case for
$\theta=\frac{\sqrt{\beta}-1}{2}$.  
The necessary adjustments for the case
$\theta=\sqrt{\beta}$ are given in Appendix~\ref{app-sec-quad}.
Henceforth we assume a normalised instance of the Membership Problem,
given by the recurrence~\eqref{eq:rel} and target~$t\in \Q$.
Our goal is to exhibit an effective bound~$B$ such that $u_n\neq t$
for all $n>B$.  
To this end, our strategy is to find $B$ such that for
all~$n>B$ there exists a prime that divides $u_n$ but not~$t$.
At the conclusion of the proof of \autoref{theo-decide-quad}, we demonstrate the argument and techniques with a worked example, namely \autoref{ex:example2} in \Cref{subsec-conclud}.

Let $\beta\equiv 1\pmod{4}$ be a square-free integer and
$\K=\Q(\sqrt{\beta})$ a quadratic field over which the polynomials $f$
and $g$ in~\eqref{eq:rel} split completely.  Let
$\theta:=\frac{\sqrt{\beta}-1}{2}$ be such that $\Z[\theta]$ is the
ring of integers of $\K$.  Write
$m_{\theta}(x):=x^2+x+\frac{1-\beta}{4} \in \Z[x]$ for the minimal
polynomial of $\theta$.

\subsection{Partitioning the roots of \texorpdfstring{$fg$}{fg}}
\label{subsec-partition-roots}
Let $\mathcal R$ be the set of roots of $fg$.
We partition \(\mathcal{R}\) into disjoint subsets (which we shall call the \emph{classes} of \(\mathcal{R}\)) with \(\alpha,\tilde{\alpha}\in \mathcal{R}\) in the same class
if and only if \(\alpha-\tilde{\alpha}\in\Z\).
We say that a subset of $\mathcal{S}\subseteq \mathcal {R}$  is \emph{balanced} if $f$ and
$g$ have the same number of roots in $\mathcal S$, counting repeated roots
according  to their multiplicity. 
A subset is \emph{unbalanced} otherwise.
The linchpin of the proof of \autoref{theo-decide-quad} is the balance of roots in the classes.

If each class (as above) is
balanced then the roots of $f$ and $g$ can be placed in a bijection
under which corresponding roots differ by an integer and have the same
multiplicity in $f$ and $g$ respectively.  
In this case, by cancelling
common factors in the expression
$u_n = \prod_{k=1}^n
\frac{g({k})}{f({k})}$, we see that
for $n$ sufficiently large $u_n$ is a rational function in $n$.  
For such an instance, 
the Membership Problem reduces to the problem of deciding whether a univariate polynomial with rational integer
coefficients has a positive integer root, which is straightforwardly
decidable. 
A detailed account for this argument is given in~\cite[Appendix B]{NPSW022}.

Let us now consider the case where there is an unbalanced class~\(\mathcal{C}\).
By the assumption that $f$ and $g$ have the same
degree, there must, in fact, be at least two unbalanced
classes.
It follows that there is an unbalanced class that is not
contained in $\Z$ (i.e., an unbalanced class of quadratic integers).

Here it is convenient to define the following linear
ordering on~$\mathcal R$.  
Given elements $a\theta+b$ and
$a'\theta+b'$ in $\mathcal R$
(where \(a,a',b,b'\in\Z\)),
define $a\theta+b \prec a'\theta+b'$ if
and only if one of the following four mutually exclusive conditions
holds:
\begin{enumerate}
\item $a' \leq 0 < a$,
\item $0 < a < a'$,
\item $a<a' \leq 0$,
\item $a=a'$ and $b<b'$.
\end{enumerate}

Note that the classes in $\mathcal R$ are intervals
with respect to the order~$\prec$.  Thus the order lifts naturally to a linear order on classes.
In particular, the \emph{least
  unbalanced class}~$\mathcal C_0$ is well-defined.  Let
$\alpha_0=a_0\theta+b_0$ be the greatest element in $\mathcal C_0$.
Then $ \{\alpha\in\mathcal{R}: \alpha \preccurlyeq \alpha_0 \} $ is
unbalanced 
because this set is a disjoint union of balanced classes and~\(\mathcal{C}_0\).
Further, $a_0> 0$
because the least unbalanced class is necessarily a subset of quadratic integers of the form \(a_0\theta + \Z\).
Here we note that the image of an unbalanced class
under the
automorphism of~$\K$ that interchanges $\sqrt{\beta}$ and
$-\sqrt{\beta}$ is likewise an unbalanced class and so $a_0>0$.

\colorlet{cyan}{purple}
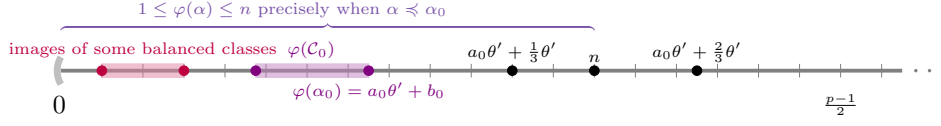
\begin{figure*}[t]
    \centering
    \begin{tikzpicture}
    \scalebox{.9}{
\def \x {3/5}

\draw[gray, ultra thick] (0,4) -- (20.5*\x,4);  
\foreach \y in {1,...,20}
\draw[draw=gray] (\y*\x,4.1)--(\y*\x,3.9);
\draw[gray, line width=1mm, opacity=.5] (0,4.2) arc[start angle=150, end angle=210, radius=.4cm];
\node[left] at (.2,3.5) {{\large  $0$}};
\node at (19*\x,3.5) {{\scriptsize $\frac{p-1}{2}$}};
\filldraw[gray] (20.5*\x+0.2,4) circle (0.5pt);
\filldraw[gray] (20.5*\x+0.4,4) circle (0.5pt);
\filldraw[gray] (20.5*\x+0.6,4) circle (0.5pt);

\filldraw [cyan!50!white, opacity=0.5] (1*\x,4.1) rectangle (3*\x,3.9);
\node at (2*\x,4.3) {\textcolor{cyan}{{\scriptsize images of some balanced classes}}};
 \node[left] at (1*\x+.4,3.7) {};%
 \filldraw[cyan] (1*\x,4) circle (2pt);
 \node[left] at (3*\x+.4,3.7) {};%
 \filldraw[cyan] (3*\x,4) circle (2pt);

 \filldraw [violet!50!white, opacity=0.5] (4.75*\x,4.1) rectangle (7.5*\x,3.9);
\node at (6.1*\x,4.3) {\textcolor{violet}{{\scriptsize $\varphi(\mathcal{C}_0)$}}};
 \node[left] at (4.75*\x+.4,3.7) {};%
 \filldraw[violet] (4.75*\x,4) circle (2pt);
 \node at (7.5*\x,3.7) {{\textcolor{violet}{\scriptsize $\varphi(\alpha_0) = a_0\theta' + b_0$}}};
 \filldraw[violet] (7.5*\x,4) circle (2pt);

\node[above] at (11*\x,4) {{\scriptsize $a_0 \theta' + \frac{1}{3}\theta'$}};
\filldraw[black] (11*\x,4) circle (2pt);

\node[above] at (13*\x,4) {{\scriptsize $n$}};
\filldraw[black] (13*\x,4) circle (2pt);

\node[above] at (15.5*\x,4) {{\scriptsize $a_0 \theta' + \frac{2}{3}\theta'$}};
\filldraw[black] (15.5*\x,4) circle (2pt);

 \draw [pen colour={RoyalPurple}, decorate, decoration = {calligraphic brace}, thick] (0,4.6) --  (13*\x,4.6);
 \node[right] at (1,4.9) {{\textcolor{RoyalPurple}{\scriptsize  $1 \leq \varphi(\alpha)  \leq n$ precisely when  $\alpha \preccurlyeq \alpha_0$}}};

}
\end{tikzpicture}
\caption{
Image of \(\varphi\) 
on \(\Z\) 
as well as  the positions of constants used in the proof of  \autoref{thm:quadratic} to determine that \(v_p(u_k)\neq 0\) for \(k\) that satisfy \(a_0\theta'+\frac{1}{3}\theta' \le k \le a_0\theta'+\frac{2}{3}\theta'\). 
Note that the preimages \(\alpha\in\mathcal{R}\) such that \(1\le \varphi(\alpha)\le n\) are precisely those roots for which~$\alpha \preccurlyeq \alpha_0$.}
\label{fig:pretty-things}
\end{figure*}

\subsection{Threshold conditions}
\label{subsec-thre}

Next we exhibit a threshold $B$ (defined in terms of the
recurrence~\eqref{eq:rel}) such that for all $n>B$ there are
rational integers $\theta'$ and $p$, with $p>n$ prime, satisfying the
following conditions:
\begin{enumerate}
\item[(P1)] $m_\theta(\theta')\equiv 0\pmod{p}$;
\item[(P2)]  The function $\varphi:\mathcal{R}\rightarrow\Z$ defined by
\[  \varphi(a\theta+b) = \begin{cases}
a\theta' + b & \text{if } a> 0,\\
a\theta' + b + p  & \text{if } a\leq 0
\end{cases}\]
is an order embedding of $(\mathcal{R},\prec)$
in $(\{0,1,\ldots,p-1\},<)$.
\item[(P3)] The set $\{ \alpha \in \mathcal{R} : 1 \leq \varphi(\alpha)
  \leq n \}$ is unbalanced.
\end{enumerate}

The definitions for $\theta'$ and $p$  follow.
Consider
the interval
\begin{gather} I(n):= \left\{ k\in\N : \frac{6n}{3a_0+2}\leq k-1\leq
    \frac{6n}{3a_0+1} \right\} \, .
    \label{eq:INT}
    \end{gather} 
    and let
    $M$ be an upper bound on $\{|a|,|b|: a \theta+b\in 
\mathcal{R}\}$, and the height of the minimal polynomials of the elements of $\mathcal{R}$.
By
\autoref{theo:bigprimespolyT}, there is an effective threshold $B$, which we may assume to be greater than $3M(M+1)$, 
such that for all $n>B$ there exists a prime $p > 3Mn$ that divides the
product
\[ \prod_{\substack{k\in I(n)\\ k \in 2\N+1}} k^2-\beta.\]
Furthermore, since \(p\) is prime, we deduce that there exists 
$k_0 \in I(n) \cap (2\N+1)$ such that $k_0^2 \equiv \beta \pmod{p}$. 
We define $\theta' \in \N$ to be the number such that $k_0=2\theta'+1$.

We will show that $\theta'$ and $p$ satisfy Conditions (P1)--(P3).  Now
\begin{equation*}
    m_\theta(\theta') = m_\theta \biggl(\frac{k_0 - 1}{2} \biggr) \equiv k_0^2-\beta \equiv 0 \pmod{p}.
\end{equation*}
Thus $\theta'$ satisfies
Condition (P1).

We turn next to establishing Condition (P2).
Since $k_0\in I(n)$ and $k_0=2\theta'+1$, we have
\begin{equation}
(a_0+\textstyle\frac{1}{3})\theta' \leq n \leq
(a_0+\textstyle\frac{2}{3})\theta'.
  \label{eq:INEQ}
\end{equation}
Combining~\eqref{eq:INEQ} with the inequality $1\leq a_0 \leq M$ and rearranging terms gives 
$\frac{n}{M+1} \leq \theta' \leq \frac{3n}{4}$.
Recalling that $p>3Mn$ and $n > B \geq 3M(M+1)$, we conclude that
\begin{equation}
3M \leq \theta' \leq \frac{p}{4M} \, . 
\label{eq:INEQQ}
\end{equation}
The inequality $\theta'\leq\frac{p}{4M}$
in~\eqref{eq:INEQQ}
implies that for all roots \(a\theta + b \in \mathcal{R}\), $\varphi(a\theta+b)$ is equal to
    \begin{align*}
       a\theta' + b \in &
            \left\{0,\ldots,\frac{p-1}{2}\right\} \text{ if } a>0, \text{ and } \\
       a\theta' + b + p \in & \left\{\frac{p-1}{2},\ldots,p\right\} \text{ if } a\leq 0
    \end{align*}
(for the latter, recall that $\mathcal{R}$ contains no positive integers).  
Further, since  $|b| \leq M<\theta'$
for all $a\theta+b\in \mathcal{R}$, we conclude that
$\varphi$ is an order
embedding of $(\mathcal{R},\prec)$ into 
$(\{0,\ldots,p-1\},<)$.
This establishes~(P2).

Equation~\eqref{eq:INEQ}  
and the inequality $\theta' \leq  3M$ from~\eqref{eq:INEQQ}
yields \[ \varphi(\alpha_0) < a_0\theta'+M < n < (a_0+1)\theta'-M \, .\] 
Hence \(\varphi(\alpha_0)\), the image of the greatest element in \(\mathcal{C}_0\) is upper bounded by \(n\).
From the definition of the order \((\mathcal{R},\preccurlyeq)\), for
$\alpha\in\mathcal R$ we have that $\alpha\preccurlyeq \alpha_0$ if and only if
$\varphi(\alpha) \leq n$.  
Thus (P3) follows from the fact that the set
$\{\alpha\in\mathcal{R}:\alpha \preccurlyeq \alpha_0\}$ is unbalanced.

\subsection{Prime divisors of \texorpdfstring{$u_n$}{un}}
\label{subsec-primediv}

To conclude the proof, we now explain why properties (P1)--(P3) imply that
$p$ divides $u_n$.  
Define $\psi:\Z[\theta]\rightarrow \Z/p\Z$ by
\[ \psi(a\theta+b):=(a\theta'+b)\bmod{p}.\]  
Condition (P2) entails that
$\psi$ and $\varphi$ agree on $\mathcal R$, while Condition (P1) entails
that $\psi$ is a ring homomorphism. 
(We note in passing that the kernel of $\psi$ is a prime 
ideal~$\mathfrak{p}$ appearing in prime ideal factorisation of $p\Z[\theta]$.)
Hence the polynomial $fg$ splits
over $\Z/p\Z$ and $\varphi$ maps the roots of $fg$ in $\K$ to roots of
$fg$ in $\Z/p\Z$.

Consider the decomposition of the \(p\)-adic valuation
\begin{equation*}
    v_p(u_n) = \sum_{k=1}^n (v_p(g(k))-v_p(f(k))).
\end{equation*}

Let $h(x)$ be an irreducible factor of either $f$ or $g$.
Then $h(x)$ is monic, of degree at most $2$ and height at most $M$.  Since $p>3Mn$, we easily see that $|h(k)|<p^2$
for all $1\leq k \leq n$ and hence
$v_p(h(k)) \in \{0,1\}$.
It follows that $v_p(u_n)$ is equal to the number of roots of $g$ 
in $\Z/p\Z$ that lie in $\{1,\ldots,n\}$ minus the number of roots of $f$ in $\Z/p\Z$ that lie in $\{1,\ldots,n\}$, counting repeated roots according to their multiplicity. 
Observe that this count takes place on the set \(\{\alpha\in\mathcal{R} : 1\le \varphi(\alpha)\le n\}\).
By Condition (P3), the aforementioned set is unbalanced and so it quickly follows that
$v_p(u_n)\neq 0$.

\subsection{Concluding the proof of \autoref{theo-decide-quad}}
\label{subsec-conclud}
Finally, let us return to the decidability of the Membership Problem in the setting of \autoref{theo-decide-quad}.
By our standing assumption that all instances of the problem are normalised we have that $t\neq 0$.  We have exhibited a bound $B$ such that for all $n>B$ there exists a prime $p>3Mn$
such that $v_p(u_n)\neq 0$.  This means that if $p_0$ is the largest prime such that $v_{p_0}(t)\neq 0$ then
for $n>\max\left(B,\frac{p_0}{3M}\right)$ we have $u_n\neq t$.
Thus we have reduced the Membership Problem in this setting to a finite search problem.
This immediately establishes decidability and concludes our proof of \autoref{theo-decide-quad}.

We illustrate the construction underlying   \autoref{theo-decide-quad} with a worked example.

\begin{example} \label{ex:example2}
Let $\langle u_n \rangle_{n=0}^\infty$ be the hypergeometric sequence defined by the recurrence
\begin{equation*}
f(n)u_{n} - g(n)u_{n-1} = 0 \quad \text{with} \quad u_0=1,
 \end{equation*}
where
 \(f(x) :=  x^2 -x -1\) and \(g(x) :=  x^2 + 2x - 4\).

The polynomials  $f$ and $g$ both have
splitting field $\K = \Q(\sqrt{5})$, with  ring of integers $\Z[\frac{\sqrt{5} - 1}{2}]$. Define $\theta := \frac{\sqrt{5} - 1}{2}$, and write $m_\theta(x) = x^2 + x - 1$ for its minimal polynomial.

Since $f=(x-\theta-1)(x+\theta)$ and $g=(x-2\theta)(x+2\theta+2)$,
the set of roots of $fg$ is
$\mathcal{R} = \{\theta + 1, -\theta,  2\theta, -2\theta-2\}$. The definition of the linear ordering $\prec$ on $\mathcal{R}$ 
 (see Section~\ref{subsec-partition-roots}) yields
\[ \theta + 1 \prec 2\theta \prec -2\theta-2 \prec -\theta \, ,\]
with the least unbalanced class being $\mathcal{C}_0 := \{\theta + 1\}$.
Define $M := 4$, which is an upper bound on $\{|a|,|b|: a \theta+b\in 
\mathcal{R}\}$ and the heights of $f$ and $g$ (which are the respective minimal polynomials of the elements of $\mathcal{R}$). 

Write $p_0$ for the largest prime such that $v_{p_0}(t) \neq 0$.
 By \Cref{theo:bigprimespolyT}, there is a bound $B> 3M(M+1)$ such that for all $n > \max(B, \frac{p_0}{3M})$,  there is a prime $p$ with $v_p(u_n) \neq v_{p}(t)$. 
This permits us to reduce  the Membership Problem for \(\langle {u_n}\rangle_{n=0}^{\infty}\)  and $t$ to a finite search problem. 

Given a target $t$ and sufficiently large $n$, the process in the proof of \Cref{theo-decide-quad} 
finds a prime $p$ with 
$v_p(u_n)\neq v_p(t)$.  Below we illustrate the idea of the proof
in the specific case 
$t=\frac{11}{59}$ and $n=61$.  (Here we have $p_0=59$ and hence  $n>3M(M+1)$ and $n>\frac{p_0}{3M}$, as required in the proof of Theorem~\ref{theo-decide-quad}.) We will establish the existence of a  
 prime $p$ such that 
\(v_p(u_{61})\neq 0\) \text{and}  \(v_p(t)=0\),
witnessing that $u_{61} \neq t$.

Guided by the proof of  \Cref{theo:bigprimespolyT}, we observe that prime $p:=1481 > 3n M$ is a divisor of 
\[ \prod_{\substack{k\in I(61)\\ k \in 2\N+1}} k^2-\beta=(75^2 - 5)(77^2 - 5) \cdots(91^2-5).\]
In particular, we have $p|(77^2-5)$.
Choosing $\theta':=\frac{77-1}{2}=38$, we observe that
the pair  $p$ and $\theta'
$ satisfy conditions (P1)-(P3) in Section~\ref{subsec-thre}:
\begin{enumerate}
   \item[(P1)] $m_\theta(\theta')=38^2+38-1\equiv 0\pmod{p}$;
   \item[(P2)] 
The map 
$\varphi:\mathcal{R}\rightarrow \Z/p\Z$  is an order embedding of $(\mathcal{R},\prec)$
into $(\{0,\ldots,p-1\},<)$,  which can be seen  by noting that
\begin{align*}
  \varphi(\theta + 1) = 39 \qquad & \qquad  \varphi(2\theta) = 76\\
  \varphi(-2\theta -2) = 1403 \qquad & \qquad  \varphi(-\theta) = 1443 ,
\end{align*}
whence $\varphi(\theta + 1)  <  \varphi(2\theta) <  \varphi(-2\theta -2) <  \varphi(-\theta)$.
 \item[(P3)] The set $\{\alpha \in \mathcal{R}:1\leq \varphi(\alpha) \leq 61\} = \{\theta+1\}$ is unbalanced.
\end{enumerate}

By the  arguments above, in the equation
\[ v_p(u_{61}) = \sum_{k=1}^{61} (v_p(g(k))-v_p(f(k))),\]
the only non-zero term on the right-hand side is
\[ v_p(f(\varphi(\theta + 1))) = v_p(f(39)) = v_p(1481) = 1.\]
It follows that $v_p(u_{61}) = -1$, while
$v_p(t)=0$.

\end{example}

\section{Discussion} \label{sec:discussion}
In light of the results in~\Cref{sec-unequalsplitting}
a clear direction for further research is to 
examine the decidability of the Membership Problem for recurrences whose polynomial coefficients share the same splitting field.  
We recall that previous work \cite{NPSW022} established decidability when the polynomial coefficients split over the rationals.  The present work considers the case when the two polynomials split over the ring of integers of a quadratic field.  In future work we will consider the more general case in which the all roots of the coefficient polynomials have degree at most two. 
As far as the authors are aware, the only known results in this direction are the (un)conditional decidability results for quadratic parameters in~\cite{kenison2022applications}.
Extending the approach of the present paper to the case of polynomials with roots of degree more than two would require new results on large prime divisors on the values of such polynomials, which is an active area of research in number theory.

%\SkipTocEntry\section*{Acknowledgements}

\bibliographystyle{plain}
\bibliography{literature}

\appendix

\section{Proofs for Section~\ref{sec-polyseq}}
\label{sec:app-lowerbownun}

\begin{proof}[Proof of \autoref{claim:sizefunnyQ}]
First note that $A=\frac{b-a}{c}$.
The claim states that
\[ \log(F_n)	\geq \frac{2(b-a)}{c} (n\log n - n) \, .\]

The proof is as follows.
Given $y\in \N$, we first observe that 
\[\prod_{cx \leq y} (cx)^2 \geq   c^{2y}\bigg(\bigg\lfloor\frac{y}{c} \bigg\rfloor!\bigg)^2.\]
By Stirling's formula, the logarithm of the  quantity above is at least 
\begin{equation}
\label{eq:logQI}
\frac{2y}{c}\log c+ \frac{2y}{c} \log y -\frac{2y}{c}.
\end{equation}

Now \(F_n = \prod_{k\in I(n)} (k^2+\beta)\) is bounded from below by
\begin{equation*}
	F_n \geq \prod_{k\in I(n)} k^2
	\geq \prod_{ an\leq cx \leq bn} (cx+d)^2 
	\geq \prod_{ an\leq cx \leq bn} (cx)^2.
	\end{equation*}
By the above, and Equation~\eqref{eq:logQI} we conclude that \(\log(F_n)\) is bounded from below by
\begin{equation*}
	\log \prod_{cx \leq bn} c^2x^2  - \log \prod_{cx \leq an} c^2x^2
	\geq  \frac{2(b-a)}{c}( n \log n -n),
\end{equation*}
as required.  
\end{proof}

We now prove the inequality~\eqref{eq:BOUND} from the proof
of~\Cref{theo:bigprimespolyT}.
Noting that $A=\frac{b-a}{c}$, the inequality states
that
\begin{equation}
  e_p \leq  \frac{2An}{p-1}+  \frac{\varepsilon_2 \log n}{\log p}
\tag{\ref{eq:BOUND}}
\end{equation}

\begin{proof}[Proof of Inequality \eqref{eq:BOUND}]

If $e_p=0$ then the bound trivially holds.
Suppose $e_p>0$.  Then the function $f$ has two roots in $\Z/p\Z$.
Define $g\in\Z[x]$ by $g(x):=f(cx+d)$.
In case $p>c$ then $g$ also has two roots in $\Z/p\Z$.
For all $n\in\N$ define the products
\begin{equation*}
    G_n:=\prod_{k=1}^{\left\lfloor \frac{bn-d}{c} \right\rfloor} g(k)
    \quad \text{and} \quad
    H_n:=\prod_{k=1}^{\left\lceil \frac{an-d}{c} \right\rceil - 1} g(k)
\end{equation*}
    Then $F_n=\frac{G_n}{H_n}$ and hence
    $e_p=v_p(F_n)=v_p(G_n)-v_p(H_n)$.
Applying \Cref{prop:moll}, we get, for some constant $\varepsilon>0$,
\begin{align*}
v_p(G_n) \leq & \frac{2(bn-d)}{c(p-1)} + \frac{\varepsilon\log 
n}{\log p} \qquad {\text{ and }  }\\
v_p(H_n) \geq & \frac{2(an-d-c)}{c(p-1)} -
\frac{\varepsilon\log 
n}{\log p}.
\end{align*}
The upper bound in~\eqref{eq:BOUND} follows, for a suitable
choice of the constant $\varepsilon_2$, by subtracting the
upper bound for $v_p(G_n)$ from the lower bound for $v_p(H_n)$.
\end{proof}

\section{Second Case in the proof of Theorem~\ref{theo-decide-quad}}
\label{app-sec-quad}

Let $\beta \not\equiv 1 \pmod{4}$ be a square-free integer and
$\K=\Q(\sqrt{\beta})$ a quadratic field over which the polynomials $f$
and $g$ in~\eqref{eq:rel} split completely.  By Theorem~\ref{thm:quadratic}, the ring of integers of the field 
$\K$ is $\Z[\sqrt{\beta}]$. We define $\theta:=\sqrt{\beta}$,
so that $m_{\theta}:=x^2-\beta$ is the minimal polynomial of~$\theta$. 

Exactly as in Subsection~\ref{subsec-partition-roots}, we partition the set $\mathcal R$ of roots of $fg$ into classes, define the balanced and unbalanced classes, define the linear ordering~$\prec$ on $\mathcal R$, and consider the \emph{least unbalanced class} $C_0$.  
Let $a_0\theta+b_0$ be the greatest element in $C_0$ and note that $a_0\geq 1$ as before.

\subsection{Threshold conditions}
Next we exhibit a threshold $B$ (defined in terms of the
recurrence~\eqref{eq:rel}) such that for all $n>B$ there are
rational integers $\theta'$ and $p$, with $p>n$ prime, satisfying the three
conditions (P1)--(P3) as stated in Subsection~\ref{subsec-thre}.

The definitions for $\theta'$ and $p$ are as follows.
Consider
the interval
\[ I(n):= \left\{ k\in\N : \frac{3n}{3a_0+2}\leq k \leq
    \frac{3n}{3a_0+1} \right\} \] 
    and let
    $M$ be an upper bound on $\{|a|,|b|: a \theta+b\in 
\mathcal{R}\}$, and the height of the minimal polynomials of the elements of $\mathcal{R}$.
By
\autoref{theo:bigprimespolyT}, there is an effective threshold $B$, which we may assume to be greater than $3M(M+1)$,
such that for all $n>B$ there exists a prime $p > 3Mn$ that divides the
product
\[ \prod_{k\in I(n)} k^2-\beta.\]
Further, since \(p>3Mn\) is prime, we deduce that for \(n>B\) there exists 
$\theta' \in I(n)$ such that $(\theta')^2 \equiv \beta \pmod{p}$. 

We will show that $\theta'$ and $p$ satisfy Conditions (P1)--(P3).  Now
\begin{equation*}
    m_\theta(\theta') \equiv (\theta')^2-\beta \equiv 0 \pmod{p}.
\end{equation*}
Thus $\theta'$ satisfies
Condition (P1).

We turn next to establishing Condition (P2).
Since $\theta'\in I(n)$, it is straightforward to show that
\begin{equation}
(a_0+\textstyle\frac{1}{3})\theta' \leq n \leq
(a_0+\textstyle\frac{2}{3})\theta' \, .
  \label{app-eq:INEQ}
\end{equation}
These 
  bounds are identical to those in~\eqref{eq:INEQ}.
  In this case, Conditions (P2) and (P3) follow by an analogous argument to that given in Subsection~\ref{subsec-thre}.

\bigskip
\bigskip

The remaining part of the proof for the case $\beta\equiv  1 \pmod{4}$, as given  in Subsection~\ref{subsec-primediv} and Subsection~~\ref{subsec-conclud}, carries over to the present case without change.

\end{document}